\newcolumntype{K}[1]{>{\centering\arraybackslash}p{#1}}
\newtheorem{lem}[theorem]{Lemma}{\bfseries}{\itshape}
{\bfseries}{\itshape}
\newtheorem{defn}{Definition}{\bfseries}{\itshape}
{\bfseries}{\itshape}
\newcommand{\ad}{\ensuremath{adopt}\xspace}
\newcommand{\ma}{\ensuremath{match}\xspace}
\newcommand{\ov}{\ensuremath{override}\xspace}
\newcommand{\wa}{\ensuremath{wait}\xspace}
\newcommand{\sa}{\ensuremath{n_a}\xspace}
\newcommand{\sh}{\ensuremath{n_h}\xspace}
\newcommand{\C}{\mathcal{C}}
\newcommand{\E}{\mathbb{E}}
\newcommand{\T}{\mathcal{T}}
\newcommand{\myitem}[1]{\noindent{\bf #1}}
\DeclarePairedDelimiter\ceil{\lceil}{\rceil}
\newcommand{\size}[1]{\left\vert#1\right\vert}
\newcommand{\future}[1]{\ensuremath{future\left(#1\right)}}
\newcommand{\past}[1]{\ensuremath{past\left(#1\right)}}
\newcommand{\ratio}{\left(\frac{\alpha}{1-\alpha}\right)}
\author{Yonatan Sompolinsky\inst{1} \and Aviv Zohar\inst{1,2}}
\institute{School of Engineering and Computer Science,\newline The Hebrew University of Jerusalem, Israel \and Microsoft Research, Herzliya, Israel \\ \email{$\left\{\right.$yoni\_sompo, avivz$\left.\right\}$@cs.huji.ac.il}
}
\date{}
\title{Bitcoin's Security Model Revisited}
\begin{document}
\maketitle
\begin{abstract}
	We revisit the fundamental question of Bitcoin's security against double spending attacks. While previous work has bounded the probability that a transaction is reversed, we show that no such guarantee can be effectively given if the attacker can choose when to launch the attack. Other approaches that bound the cost of an attack have erred in considering only limited attack scenarios, and in fact it is easy to show that attacks may not cost the attacker at all. We therefore provide a different interpretation of the results presented in previous papers and correct them in several ways. We provide different notions of the security of transactions that provide guarantees to different classes of defenders: merchants who regularly receive payments, miners, and recipients of large one-time payments. We additionally consider an attack that can be launched against lightweight clients, and show that these are less secure than their full node counterparts and provide the right strategy for defenders in this case as well. Our results, overall, improve the understanding of Bitcoin's security guarantees and provide correct bounds for those wishing to safely accept transactions. 
\end{abstract}

\section{Introduction}
Users of the Bitcoin system~\cite{SATOSHI} rely on the irreversibility of monetary transfers when using the currency. In particular, merchants that accept bitcoins, must be assured that once a payment has been accepted, it will not be reversed or routed to a different destination, and that they can safely dispense products and services in exchange for the funds. 

Payments may be rerouted or canceled if, for example, an attacker tries to send two conflicting transaction requests to the system in an attempt to send the same funds to two different destinations. The system cannot allow money to be used twice and thus one of the two conflicting payments must be rejected eventually. It is important that the recipient of the canceled payment is not fooled into thinking he has received the payment in the interim. Such an attack is called a \emph{double spending} attack. Indeed, Bitcoin's most important innovation is its solution to this very problem. 

Bitcoin's core data structure -- The Blockchain -- contains a record of all transactions that have been accepted by the system. Each block is a batch of accepted transactions that contains additionally the cryptographic hash of its predecessor in the chain, as well as a cryptographic proof-of-work. Blocks are created by nodes that solve this proof-of-work and in return collect fees from transactions embedded in their block and from newly minted money as well. These nodes are often called \emph{miners}.

In case several chains form, due to the concurrent action of miners, Bitcoin nodes accept the longest chain as the record of transactions that have occurred,\footnote{In fact the chain representing the highest cumulative amount of computational power is chosen. This is usually the longest chain.} and ignore transactions not contained in this chain. This re-selection of the set of accepted transactions may cause some payments to be canceled, which may be abused by an attacker. 
To be secure against such double spending, merchants are advised to wait until their transaction is included in a block, and that several blocks are built on top of it. 
The more blocks built atop a given block, the less likely it is that a conflicting longer branch will form (even under deliberate attempts).
For a transaction embedded in a block, the block containing it, and each block that follows on the main chain, is counted as an additional \emph{confirmation}.

Satoshi in his original work~\cite{SATOSHI}, as well as additional works that follow~\cite{MENI,GHOST,garay2015bitcoin}, offer a guarantee of the security of transactions in the currency. Specifically, each provides a similar theorem of the following ``flavour'':

\begin{theorem}[informal] 
	As long as the attacker holds less than 50\% of the computational power, and all honest nodes can communicate quickly (compared to the expected time for block creation), the probability of a transaction being reversed decreases exponentially with the number of confirmations it has received. 
\end{theorem}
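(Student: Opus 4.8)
The plan is to reduce the event ``the transaction is reversed after $n$ confirmations'' to a first‑passage question for a biased random walk and then bound that probability by a geometric series. First I would pin down the model implied by the hypotheses: under the fast‑communication assumption every honest block extends the single honest main chain, so block discovery can be modelled as a sequence of i.i.d.\ rounds, each producing an honest block with probability $1-\alpha$ and an attacker block with probability $\alpha<\tfrac12$. The attacker is assumed to work on a hidden chain forking from the block just before the one carrying the target transaction, and to publish it only once it overtakes the honest chain; since honest miners always adopt the longest chain, this worst‑case scenario is exactly what the merchant must defend against.

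Next I would split the race into two phases. Phase one runs until the honest chain has accumulated the $n$ confirmations the merchant waits for; the number $X$ of blocks the attacker has produced in the meantime is then $\mathrm{NegBin}(n,1-\alpha)$, with mean $n\ratio$ (Nakamoto's Poisson approximation is the $n\to\infty$ limit of this, but the exact distribution is just as easy to use). If $X\ge n$ the attack already succeeds; otherwise the attacker enters phase two from a deficit $d=n-X\ge 1$, and this deficit performs a random walk that decreases by $1$ with probability $\alpha$ (an attacker block) and increases by $1$ with probability $1-\alpha$ (an honest block). Since $\alpha<\tfrac12$ the walk drifts to $+\infty$, so by the classical gambler's‑ruin formula the probability it ever returns to $0$ from $d$ is exactly $\ratio^{\,d}$. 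Conditioning on $X$ therefore gives
\[
\Pr[\text{reversal}] \;=\; \Pr[X\ge n] \;+\; \sum_{k=0}^{n-1}\Pr[X=k]\,\ratio^{\,n-k}.
\]

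Finally I would show both terms decay exponentially in $n$. The first term is the upper tail of a negative binomial evaluated at $(1-\alpha)/\alpha>1$ times its mean, so a Chernoff bound yields $\Pr[X\ge n]\le \rho_1^{\,n}$ for an explicit $\rho_1<1$ depending only on $\alpha$. For the sum I would either (i) bound it termwise, using $\ratio^{\,n-k}\le 1$ together with the fact that the terms with $k$ near $n$ carry exponentially small negative‑binomial mass, or (ii) recognise it as $\E\!\big[\ratio^{\,n-X}\mathbf 1_{X<n}\big]$ and evaluate the generating function of $X$ in closed form; either route gives $\rho_2^{\,n}$ with $\rho_2<1$. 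Taking $\rho=\max(\rho_1,\rho_2)$ completes the proof.

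I expect the genuinely delicate step to be the second one — arguing that the \emph{sum} over intermediate deficits is truly exponentially small rather than merely vanishing. The worry is the boundary region $k\uparrow n$, where the gambler's‑ruin factor $\ratio^{\,n-k}$ is close to $1$ and contributes no decay, so the suppression there must come entirely from the negative‑binomial tail; one has to verify that the two rates compose so that no intermediate term dominates at a polynomial scale. A saddle‑point/Chernoff estimate on the exponentially tilted distribution, or the exact generating‑function evaluation, settles this cleanly, while the remaining ingredients (the transience probability $\ratio^{\,d}$ and the reduction to i.i.d.\ rounds) are standard once the fast‑communication hypothesis is in force.
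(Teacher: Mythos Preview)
Your argument is the classical Satoshi/Rosenfeld derivation and is internally sound, but note that the paper does not actually prove this informal theorem: it quotes it as the ``flavour'' of result appearing in prior work and then spends the rest of the paper arguing that, as stated, it is incomplete. The paper's own formalisation is the $\epsilon$-arbitrary-robustness theorem in Section~\ref{sec::premining}, and the proof there differs from yours in one structural way that is precisely the paper's point.

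You assume the attacker's hidden chain forks ``from the block just before the one carrying the target transaction'', i.e.\ the attacker starts with lead zero. The paper's proof does not grant this: it allows the attacker to have been mining secretly since genesis, so at the moment the target block appears the attacker may already hold a lead $l\ge 0$. Lemma~\ref{lem::phase1} bounds the distribution of this pre-mined gap by the stationary distribution of a reflected random walk, $\Pr(l)=\frac{1-2\alpha}{1-\alpha}\ratio^{l}$, and the final bound~(\ref{eq::number_of_conf}) is your negative-binomial/gambler's-ruin expression \emph{convolved with this extra geometric variable}. The exponential decay survives because the pre-mining tail is itself geometric, but the constants shift (compare Table~\ref{table::premining} with Rosenfeld's table).

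So your proof establishes the informal claim under the tacit hypothesis that the race starts even; the paper's version shows that even without that hypothesis---provided the transaction's timing is independent of the attacker's state---one still gets exponential decay, at the cost of the additional random-walk lemma. What your write-up is missing, relative to the paper, is exactly this pre-mining layer.
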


This work is motivated by the following argument against any such guarantee: If the attacker is allowed to choose the time it prefers to transmit the transaction, no probabilistic guarantee can be given that its attack will fail. Indeed, the attacker
may try to create blocks prior to the the transmission of transactions, in a preparatory stage that we term \emph{pre-mining}. The pre-mining stage may take a long while to succeed, but once it does, an attack can be carried out with success-probability 1.\footnote{The attacker may, alternatively, settle for fewer blocks during the pre-mining stage, and then carry out an attack with lower probability.} Figure~\ref{fig::premining_attack} illustrates such an attack.

It is important to note that the pre-mining stage need not be costly to an attacker. In fact, attackers that employ selfish mining strategies~\cite{ES,OPT_SELFISH_MINING,STUBORN_MINING} repeatedly create secret chains that are longer than those of the network and which can be additionally used to launch double spending attacks, and in fact gain as they do so (provided that the attacker is sufficiently well connected to the network~\cite{ES}, or if delays exist~\cite{OPT_SELFISH_MINING}). 

\myitem{Our contributions.}
We discuss two pre-mining attacks that can be used when the attacker can choose the timing of the transaction: One attack is a generalization of the Finney attack~\cite{Finney}, and the other is a generalization of the (somewhat lesser known) Vector76~\cite{Vector76} attack. The first can be used effectively against any node, whereas the second, against nodes that do not broadcast blocks such as lightweight clients that do not maintain a full copy of the blockchain.

We propose four different versions of security guarantees (for regular nodes), given that pre-mining may in general take place:
\begin{enumerate}
	\item Defending an independently generated transaction (one whose timing does not rely on the attacker) \label{g1}
	\item Defending the long-term fraction of lost transactions to the merchant
	\item Defending all transactions from ever being double spent\label{g3}
	\item Upper bounding the average profit of the attacker during a continuing attack
\end{enumerate}

We formalize these notions in Section~\ref{sec::model}. We then introduce three families of acceptance policies, $\sigma^{arb}$, $\sigma^{frac}$, and $\sigma^{total}$ that provide a defense of according to guarantees~\ref{g1}-\ref{g3} above.

With respect to the bound on the profit of attackers, we show that, indeed, attackers with enough mining power (still under 50\%) or superior networking capabilities can profitably launch double spending attacks, even when selfish mining schemes alone are not profitable to them.

The first guarantee above most closely matches the flavour of the theorem given by Satoshi~\cite{SATOSHI} and following works. We provide a corrected analysis that better accounts for pre-mining in this case, and maintains the general exponential decay. We highlight that this result is of slightly lesser use, as in most cases, attackers can easily control the timing of an attack: a buyer, for example, can choose the exact moment at which it enters a store to buy items, since merchants usually provide continuous service. 

As it is impossible to bound well below 1 the probability that a transaction timed by the attacker will succeed, or to bound the cost for an attacker, we suggest the second security guarantee as an effective upper-bound on the losses experienced by a merchant who regularly transacts with the currency. 
This guarantee corresponds to a ``safety-level'' strategy for a merchant who wishes to ensure that only a small fraction of his accepted payments are double spent. Here, we compute the optimal attack for every given policy of the merchant and thus compute its exact safety level. 

The third model best applies to large valued transactions whose introduction to the blockchain may have been selected at the convenience of the attacker. We show that waiting for a fixed number of confirmations does not provide adequate security in this case. To circumvent this, we provide a policy that requires a number of confirmations logarithmic in the length of the chain, and prove that, with high probability, no transaction will \emph{ever} be attacked when sticking to this policy. The downside of this policy is of-course the fact that the number of confirmations that it requires grows (albeit incredibly slowly), as time goes on.

We now elaborate more on pre-mining attacks and why they pose a risk for a merchant that receives a transaction that was broadcast at a timing selected by the attacker.

\subsection{Pre-mining and selective attack timings}
\begin{figure*}
	{\centering
		\includegraphics[scale=0.5]{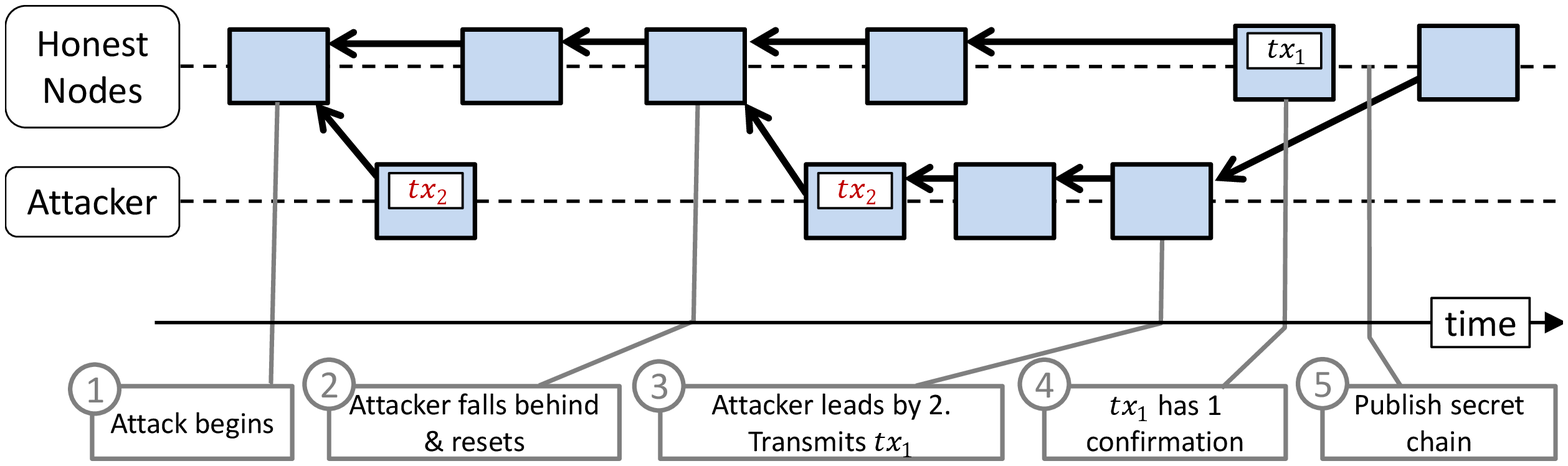}
	}	
		\emph{As the attack begins the attacker starts working on a secret chain with $tx_2$ inside its first block (1). If the attacker's chain is shorter than the honest nodes', the attacker gives up and restarts the attack (2). The attacker manages to gain a lead of 2 blocks (3). He then transmits the transaction he wishes to double spend which is included in a block (4).  The transaction now has enough confirmations (1-conf) and the attacker collects his rewards. He then publishes his secret chain and successfully double spends (5). Notice that once the pre-mining stage is concluded, the attack succeeds with probability 1, so miners that see $tx_1$ that is only broadcast then will always lose the funds.}
	\caption{\label{fig::premining_attack} The progression of a pre-mining attack on a 1-confirmation defender}
\end{figure*}

Consider the following attack scheme against a defender that waits for $k$ confirmations: 
In the pre-mining phase, the attacker begins to work on a secret branch that splits off from the most recent version of the chain. He embeds transaction $tx_2$ in this chain that conflicts the transaction $tx_1$ he wishes to double spend. If the attacker manages to create $k+1$ blocks more than the network, then he proceeds to carry out the attack. If at any point in time the network's chain is longer than the attacker's, he resets and starts a new branch, spiting off at a higher block in the public chain. Notice that this phase is in fact performed silently, and repeats until he is successful.
Once the attacker holds $k+1$ more blocks than the network's chain, he broadcasts his transaction to the network (when a large enough fee to ensure he is included in the next block), and waits for he to gain $k$ confirmations. It then releases his chain which is adopted immediately by the network, and invalidates the transaction $tx_1$. 

Observe that since the attack is only visible if the attacker is going to win, the recipient of funds can never be safe---\emph{conditioned on seeing the transaction}, the attack succeeds with probability 1. More sophisticated schemes are possible, and will be discussed throughout this paper.
The restricted version of this attack for a 0-confirmation defender is simply known as \emph{The Finney Attack} (named after its discoverer, Hal Finney, one of Bitcoin's first adopters). 
The key point in these pre-mining attacks is that they are not carried out at an ``arbitrary'' moment in time, but rather at a moment selected by the attacker. This leads us to the natural question: In what sense then is Bitcoin secure against such attacks? 

\subsection{Guarantees}
We consider three main scenarios that a recipient of funds may face:\\
\myitem{1. Protecting an independently placed transaction.} Here we assume a transaction has been placed in a block \emph{independently of the actions of the attacker}. One example of such a transaction is a minting transaction (also known as a \emph{coinbase transaction}) that the recipient may wish to accept. In this case, the attacker could not have chosen to launch the attack once he is successful in a pre-mining stage, but may still have pre-mined blocks.\footnote{This is the scenario that most closely matches previous results in~\cite{GHOST,SATOSHI,MENI,garay2015bitcoin}, although each work has analyzed it slightly differently. In this paper we augment the analysis with a proper quantification of the attacker's pre-mining.} \\
\myitem{2. Protecting a large fraction of blocks}
Here we consider a merchant that regularly receives payments in the blockchain and wishes to upper bound the loss he may suffer due to an attacker. We wish to find the attack policy that maximizes the fraction of blocks that are accepted by the network and then removed from the chain by the attacker. We note that a single double spending attack in which a long attack chain is released may remove many blocks simultaneously, thus this case differs from the previous one. An additional difference is that, in this case, attackers must actively decide when to give up on the attack on a specific block if the odds are not in his favour so that he may attack other more recent blocks instead. Such restarts are not considered when protecting a single transaction.

We show here, as well, that waiting for a fixed number of confirmations which is a function of $\epsilon$ can provide any level of security.

\myitem{3. Protecting all accepted blocks} In this case we consider a merchant that wishes to receive funds for a transaction at a moment in time that is possibly selected by the buyer. Given that the attacker can choose to place his transaction inside a block once he already knows he is certain to succeed, the only way to be fully secure in such cases is to find a policy for accepting transactions that never accepts a block that will be double spent.  As we have already discussed above, it is impossible to be secure against such a scenario by waiting for a fixed number of confirmations. While it is trivial to solve this problem by holding off acceptance of transactions indefinitely, we present a policy that guarantees that \emph{no block can be double spent} which fails only with arbitrarily low probability $\epsilon$, and requires a logarithmic number of confirmations in $1/\epsilon$ and the length of the chain. While waiting times that depend on the length of the chain and may grow are somehow unsatisfactory, we note that growth is extremely slow. Still, we believe that this is the main model that needs to be considered when dealing with extremely large transactions (e.g., when sums that are equivalent to tens of millions of USD are sent -- as was the case with several large bitcoin transactions like the FBI's seizure of the SilkRoad funds back in 2013).\\

\subsection{Related Work}
The first analysis of the resilience of Bitcoin is due to Nakamoto~\cite{SATOSHI}. His analysis considers a double spending attack without pre-mining, and is only approximate. Rosenfeld later goes on to correct the analysis~\cite{MENI}, and includes the pre-mining of a single block before it is launched (as such, it is not an attack against an arbitrarily chosen block). Rosenfeld further argues that the cost of an attack grows exponentially (which is correct as long as no block withholding is performed). Lewenberg et. al.~\cite{INCLUSIVE} also consider the exponential cost of the simple hidden-chain attack. 

In a previous work~\cite{GHOST} we have extended the analysis of security for settings with delay, demonstrating that the security of Bitcoin declines as delays increase and bounding its resulting throughput. We %They additionally includes a a modification of the protocol, namely the GHOST rule (that improves security from double spending).
additionally present a time-dependent acceptance policy, applicable for the longest chain rule, that is more secure than purely structural policies (and as a result is faster to accept for a given level of security). In this work we restrict ourselves to structural policies, as it is not always guaranteed that the recipient remains online to time the creation of blocks. 

Garay et. al.~\cite{garay2015bitcoin} provide a formal model for the core of the Bitcoin protocol, using a discrete time setup where blocks can be found simultaneously at each step. They define desired properties of a blockchain protocol, and prove that they are satisfied by Bitcoin, when the attacker is adequately bounded.
They derive asymptotic bounds for security (and not an explicit formula). Their analysis assumes the transaction to defend is available to all honest nodes, and hence too roughly corresponds to an attack on an independently chosen block.

Karame et.al.~\cite{karame2012two} and Bambert et. al.~\cite{bamert2013have} have both considered double spending 0-confirmation payments. The simplest pre-mining attack which applies to such payments is known as the Finney attack~\cite{Finney}.

Eyal and Sirer~\cite{ES} suggested and analyzed a particular attack that knocks out blocks of the network in order to gain more from mining. Sapirshtein et. al.~\cite{OPT_SELFISH_MINING} improved the attack to optimal policies. This work uses these techniques to analyze optimal double spending strategies.
%Discuss the vector76 attack~\cite{Bitcoin wiki and or forums}
\section{The model}\label{sec::model}
We adopt the original setup analyzed by Satoshi Nakamoto, and later by Rosenfeld, that has become a standard model of Bitcoin's operation at the bound where block creation rates are much higher than the propagation time of blocks. 

Miners in the Bitcoin network create blocks with exponential inter-arrival times, with parameter $\lambda$ (in Bitcoin, lambda is 1/600 blocks/second).\footnote{$\lambda$ is in fact controlled via the difficulty of the proof-of-work that is embedded in each valid block, and the exponential inter-arrival time is a good approximation given that the proof-of-work is based on guessing inputs to a cryptographic hash function that will cause its output to land within some narrow range. This process is nearly memoryless.} Unless otherwise stated, we assume that honest nodes remain connected and can always communicate, and that the mining rate $\lambda$ remains constant over time.

Each block contains a reference to a single predecessor block (a cryptographic hash). The entire history of blocks created up to time $t$ forms thus a tree, which we denote by $\T^t$. Nevertheless, the Bitcoin protocol dictates that the valid history of transactions consists of (transactions in) the longest chain of blocks alone. Accordingly, we assume honest participants only keep track of the longest chain they have been presented with, and do not maintain the entire tree structure. We further assume that blocks propagate in the network very fast relative to $1/\lambda$, and under this assumption the honest network's chain at every point $t$ in time is uniquely determined; we denote it by $\C^t=\left(C^t_0,C^t_1,C^t_2,C^t_3,...,C^t_{height(t)}\right)$ ($height(t)$ is thus the length of the honest chain at time $t$). 
The block $C^t_0$ is a unique predetermined block that any chain must have at its root, and it is also called \emph{the genesis block}.
The height of a block $b$ is its distance from the genesis (with $height\left(C^t_0\right)=0$). We denote by $\past{b}$ ($\future{b}$) the set of blocks that precede (succeed) $b$ in the chain; note that $\future{b}$ keeps developing in time, as long as $b\in \C^t$.

The attacker is assumed to own an $\alpha$ fraction of the computational power, and the rest $(1-\alpha)$ is owned by honest nodes. Thus, the attacker creates blocks at a rate of $\alpha \cdot\lambda$, and the honest participants at a rate of $(1-\alpha)\cdot\lambda$. Following~\cite{OPT_SELFISH_MINING,ES}, we assume that the attacker has some communication capabilities that may allow it to transmit blocks that it has prepared in advance to nodes, just as the honest participants are starting to propagate a block that they have created. We denote the fraction of nodes that receives the attacker's block in this case by $\gamma\in [0,1]$. If $\gamma=0$, the attacker always loses block transmission races, and if $\gamma=1$ he wins them and is in fact able to get his block first to all honest miners.

%Honest nodes are assumed to follow Bitcoin's longest chain rule, which implies that they always create new blocks at the tip of the longest chain they have learned of. 

Bitcoin nodes that participate in block creation efforts are called \emph{miners}. Upon mining a block, the miner embeds in it a set of Bitcoin transactions, created by users of the system. The transactions in $b$ must not double spend transactions in $\past{b}$.

\subsection{The acceptance policy}
A \emph{merchant} is any recipient, or beneficiary, of a Bitcoin transaction. %We assume the existence of a defender (or merchant) that is the recipient of a transaction or sequence of transactions.
Upon receiving a bitcoin transaction, the merchant considers it as either accepted (in case which it releases the good or service paid for), or not accepted---the latter, if it is not sufficiently convinced that it will remain forever in the longest chain. To decide this, the merchant, or ``defender'', uses an acceptance policy. We restrict our attention to acceptance policies that use only structural information that is available to a merchant that was offline, namely, the current chain $\C^t$.
We further assume that the defender is currently connected to the honest nodes, and is not isolated by an attacker. 

We define ``the acceptance policy'' of the defender as a function $\sigma_{\alpha,\gamma}:\mathbb{N} \to \mathbb{N}$. The function takes as input the height of the block containing the transaction, and returns the number of blocks that must be on top of it (including itself, i.e., $\size{\future{b}}+1$) before accepting the transaction. These blocks are often referred to as \emph{confirmations}. Thus, if the merchant sees a block $b$, and $\future{b}+1=\sigma_{\alpha}(h(b))$, then the transaction is considered accepted.

Perhaps the most commonly used policy for accepting transactions is the constant policy  $\sigma_{\alpha,\gamma}(h) \equiv k$ that requires a certain number of confirmations, independently of the location of the transaction in the chain. The number of confirmations, $k$, is often expressed as a function of $\alpha$ (and in our case $\gamma$ as well) to ensure the security of the chain. Our modeling, which allows for dependency on the block's height, will be justified in Section~\ref{sec::logarithmic}.

\subsection{The attack policy}
We follow~\cite{OPT_SELFISH_MINING} and define the attacker's policy as a function that determines the action of the attacker at every possible state. The attacker is assumed to be building a secret branch of the chain which he will use to later override the honest network's current chain. The attacker may take one of several actions:
\begin{itemize}
	\item \ad -- it abandons its attack, and its future chains will contain the tip of the honest network's current chain.
	\item \ov -- it overrides the honest network's current chain by publishing a strictly longer chain.
	\item \ma -- it publishes a chain of the same length as the honest network's current one.
	\item \wa -- the null action which waits for future events (i.e., block creations)
\end{itemize}

\subsection{Security properties}
We define three robustness notions that correspond to the different security guarantees suggetsed above.
For a block $b$, and acceptance policy $\sigma$, the event where $b$ is accepted by $\sigma$ is given by $\mathcal{E}_{accepted}(b) := \left\{\exists t: b=C^t_h \wedge height(t)\ge h+\sigma(h)\right\}$. Denote by $time_{accepted}(b)$ the time at which $\mathcal{E}_{accepted}(b)$ first occurred, with $time_{accepted}(b)=\infty$ if it never has occurred. The event where it is later removed from the longest chain is $\mathcal{E}_{attacked}(b):=\left\{\exists s: time_{accepted}(b)<s<\infty, b\notin \C^s\right\}$. %Let $I_{accepted}(b)$ denote the indicator random variable of the event $\mathcal{E}_{accepted}(b)$ and let $I_{attack}(b)$ be that of the event $\mathcal{E}_{accepted}(b)\cap \mathcal{E}_{attacked}(b)$.
\begin{defn}\label{def_arbitrary}
An acceptance policy $\sigma_{\alpha,\gamma}$ is $\epsilon$-arbitrary-robust, or robust against an attack on an arbitrary transaction, iff for any fixed height $h$ and $t$ such that $height(t)\ge h$, for any attacker with parameters $\alpha$ and $\gamma$, and under any attack policy $\pi_A$:\begin{equation}\label{eq_def_arbitrary}
\Pr\left(\mathcal{E}_{attacked}(C^t_h) \mid \mathcal{E}_{accepted}(C^t_h) \right)<\epsilon.\end{equation}
\end{defn}
\begin{defn}\label{def_fractional}
An acceptance policy $\sigma_{\alpha,\gamma}$ is $\epsilon$-fractional-robust, or robust against a removal of non-negligible portions of blocks, iff for any attacker with parameters $\alpha$ and $\gamma$, and under any attack policy $\pi_A$:\begin{equation}\label{eq_def_fractional}
\lim\limits_{t\rightarrow\infty}\frac{\sum_{b\in \T^{t}}\Pr\left(\mathcal{E}_{attacked}(b)\right)}{height(t)}<\epsilon.\end{equation}
\end{defn}
\begin{defn}\label{def_total}
	An acceptance policy $\sigma_{\alpha,\gamma}$ is considered $\epsilon$-totally-robust, or resilient to a double spend anywhere in the chain, iff for any attacker with parameters $\alpha$ and $\gamma$, under any attack policy $\pi_A$:
	\begin{equation}\label{eq_def_total}
	\Pr\left(\exists b\in \T^{\infty}:\mathcal{E}_{attacked}(b)\right)<\epsilon.\end{equation}
\end{defn}

In this paper, we introduce three families of acceptance policies, $\sigma^{arb}$, $\sigma^{frac}$, and $\sigma^{total}$ that are $\epsilon$-arbitrary-robust, $\epsilon$-fractional-robust, and $\epsilon$-totally-robust, respectively, for any $\epsilon>0$.

\section{Defending independently generated transactions} \label{sec::premining}
In this section we find policies that can be used by a defender to guarantee that transactions cannot be double spent, assuming that their timing cannot be controlled by the attacker. We show a strategy that waits for a constant number of confirmations (depending still on $\alpha$ and $\epsilon$) which is $\epsilon$-arbitrary-robust. In our analysis we fix some flaws in analysis done in previous works, specifically, we more precisely account for blocks mined before the attack.
Under the assumption that the attacker was \emph{not} involved in selecting the time at which the transaction appeared in the blockchain $\C^t$, we are able to provide a distribution over the number of blocks that it has prepared in advance and on any lead that it may have relative to the network (as it could not have conditioned the payment on some rare event). It is then possible to analyze when this transaction could be considered effectively irreversible.
This guarantee is applicable, for example, when the transaction is a minting transactions, whose timing is determined by the time at which the miner created a block. 

For the moment, we focus our analysis on the case $\gamma=0$. We define an acceptance policy $\sigma^{arb}_{\alpha}=\sigma^{arb}_{\alpha,0}$ as follows:
\begin{defn}
	Let $\sigma^{arb}_{\alpha}:=\min\left\{n\in\mathbb{N} : f(n,\alpha)<\epsilon\right\}$, \\where
\begin{align}\label{eq::number_of_conf}
f(n,\alpha):=&\sum\limits_{l=0}^{\infty}\frac{1-2\cdot\alpha}{1-\alpha}\cdot\left(\frac{\alpha}{1-\alpha}\right)^l\cdot \\ \nonumber &\left(\sum\limits_{m=0}^{n-l}\binom{m+n-1}{m}\cdot\alpha^m\cdot(1-\alpha)^n\cdot\left(\frac{\alpha}{1-\alpha}\right)^{n+1-m-l}+\right.\\&
\left.\sum\limits_{m=n-l+1}^{\infty}\binom{m+n-1}{m}\cdot\alpha^m\cdot(1-\alpha)^n\right)  \nonumber
%\\ =\sum\limits_{l=0}^{\infty}& \frac{1-2\cdot\alpha}{1-\alpha}\cdot\frac{\alpha}{1-\alpha}
%\left(1-\left(\frac{\alpha}{1-\alpha}\right)^{l-1}\right)\cdot \\ \nonumber &
%\left(\sum\limits_{m=0}^{n-l}\binom{m+n-1}{m}\cdot\alpha^n\cdot(1-\alpha)^m + \right.\\ \nonumber &
%\left.1 - \left(\sum\limits_{m=0}^{n-l}\binom{m+n-1}{m}\cdot\left(\alpha^m\cdot(1-\alpha)^n-\alpha^n\cdot(1-\alpha)^m\right)\right)\right)
\end{align}
\end{defn}

%We argue that $\sigma^{arb}_{\alpha}$ is $\epsilon$-arbitrary-robust.
\begin{theorem}
For all $\epsilon>0$, the policy $\sigma^{arb}_{\alpha}$ is $\epsilon$-arbitrary-robust.
\end{theorem}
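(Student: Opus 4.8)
The plan is to fix $n:=\sigma^{arb}_{\alpha}$ and prove that, for every attack policy $\pi_A$, the conditional probability in~\eqref{eq_def_arbitrary} is at most $f(n,\alpha)$; since $f(n,\alpha)<\epsilon$ by the definition of $\sigma^{arb}_{\alpha}$, the theorem follows. As only an upper bound is needed, I would argue that the worst case is the textbook hidden-chain attack: the attacker keeps a single secret branch that forks below $C^t_h$, restarts it from the current honest tip whenever he falls strictly behind, never abandons it and never publishes a chain of the same length once $C^t_h$ has appeared (with $\gamma=0$ a \ma never helps), and publishes only after $C^t_h$ has been accepted---publishing earlier merely prevents the merchant from releasing the goods and is useless to the attacker. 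Abandoning the branch, publishing early, and publishing an equal-length chain each only weaken the attack, so it suffices to bound the probability that this canonical attack removes $C^t_h$ after its acceptance.

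The key ingredient---and the one I expect to be hardest to state carefully---is a correct treatment of \emph{pre-mining}, precisely where earlier analyses were loose (they allowed a head start of only $0$ or $1$ block). Let $D$ denote the lead of the attacker's secret branch over the honest chain, measured above their last common block. Under restart-when-behind, $D$ is a birth--death walk on $\mathbb{Z}_{\ge 0}$ that steps $+1$ with probability $\alpha$ and $-1$ (truncated at $0$) with probability $1-\alpha$ at each block creation; since $\alpha<\tfrac12$ it is positive recurrent, and detailed balance gives the stationary law $\pi_l=\frac{1-2\alpha}{1-\alpha}\ratio^{\,l}$. Because the transaction in $C^t_h$ is placed \emph{independently} of the attacker, he cannot arrange for $D$ to be large when $C^t_h$ appears: the walk starts at $0$ at the genesis and is stochastically monotone, so its law at the (attacker-independent) creation time of $C^t_h$ is dominated by $\pi$ for every finite horizon $t$ and every policy, and a renewal argument turns this into the statement that the attacker's head start relative to $C^t_h$ is stochastically dominated by $\pi$. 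No positive bound is available, and none is claimed, once $\alpha\ge\tfrac12$, where $\pi$ is not a probability law.

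Conditioned on a head start of $l$, the remainder is the standard two-phase race. For $C^t_h$ to be accepted the honest chain must grow by $n$ blocks; labelling block creations ``honest'' with probability $1-\alpha$ and ``attacker'' with probability $\alpha$, the number $m$ of attacker blocks produced before the $n$-th honest block is negative-binomial, with weight $\binom{m+n-1}{m}\alpha^m(1-\alpha)^n$. At acceptance the honest chain leads the attacker's branch by $\delta=n-l-m$. If $\delta<0$ (that is, $m\ge n-l+1$) the attacker already holds a strictly longer branch and wins with probability $1$. If $\delta\ge 0$ (i.e.\ $m\le n-l$) the continuation is a gambler's-ruin race with up-probability $\alpha$, which the attacker wins---needing a net surplus of $\delta+1$, since with $\gamma=0$ a tie does not override---with probability $\ratio^{\,\delta+1}=\ratio^{\,n+1-m-l}$. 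Summing the overtaking probability against $\pi_l$ and the negative-binomial weights, and splitting the inner $m$-sum at $m=n-l$, reproduces exactly the two inner sums of~\eqref{eq::number_of_conf}; hence the canonical attack, and therefore every attack, succeeds with probability at most $f(n,\alpha)<\epsilon$.

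The two delicate points I would be careful about are: (a) making the pre-mining bound rigorous for \emph{every} attack policy and \emph{every} finite $t$---this needs a monotone coupling comparing the reflected walk with its stationary law, together with the observation that conditioning on $\mathcal{E}_{accepted}(C^t_h)$ (rather than on a fixed calendar time) can only stochastically shrink the attacker's lead at acceptance and hence cannot help him; and (b) verifying that the two-phase decomposition is a genuine upper bound and not merely an approximation---that trajectories which trail during the confirmation phase but surge afterwards are accounted for exactly once---which follows from the strong Markov property applied at the acceptance time.
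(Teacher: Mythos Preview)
Your proposal is correct and follows essentially the same route as the paper: bound the pre-mining lead by the stationary law $\pi_l=\frac{1-2\alpha}{1-\alpha}\ratio^{l}$ of the reflected birth--death walk, couple it with the negative-binomial count of attacker blocks during the $n$ honest confirmations, and finish with the gambler's-ruin probability $\ratio^{n+1-m-l}$ after acceptance; summing reproduces~\eqref{eq::number_of_conf}. The only organisational difference is that the paper isolates the pre-mining bound as a separate lemma and phrases the lead as $\max_{z\in C^{pub}_{t'}}\{R^z_{t'}-Q^z_{t'}\}$ (a maximum over all possible fork points), proving directly that restart-when-behind is the gap-maximising policy and that the reflected walk's law at any finite step is dominated by its stationary distribution---exactly the ``delicate point (a)'' you flagged.
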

\begin{proof}
Let $b=C^t_h$ such that $len(t)\ge h$.
Let $C^{pub}_{t}$ be the longest chain in the published tree up to time $t$, and let $C^{oracle}_{t}$ be the longest chain in the entire tree, including secret attack-blocks. 
For any $z\in C^{pub}_{t}$, put $R^z_{t}:=\size{\future{z}\cap C^{oracle}_{t}\setminus C^{pub}_{t}}$ and $Q^z_{t}:=\size{\future{z}\cap C^{pub}_{t}}$. We call $\max_{z\in C^{pub}_{t'}}\left\{R^z_{t'}-Q^z_{t'}\right\}$ the \emph{pre-mined gap} of the attacker at time $t'$. In Lemma~\ref{lem::phase1} we show that a random variable $Y$ with distribution vector $\Pr(Y=n)=\frac{1-2\cdot\alpha}{1-\alpha}\cdot\left(\frac{\alpha}{1-\alpha}\right)^n$ stochastically dominates (first-order) the distribution of the random variable $\max_{z\in C^{pub}_{t'}}\left\{R^z_{t'}-Q^z_{t'}\right\}$.
Now, a necessary condition for a successful attack is that, above some $z\in\past{b}$, the attacker has managed to create a chain which is longer than the published chain above $z$ by the time when the attack is released (i.e., when the secret blocks are published). Therefore, the maximal pre-mined gap of the attacker over a block $z\in \past{b}$, by $time(b)$, can be upper bounded by a random variable with the distribution $(p_n)$. Since the creation of $b$'s predecessor, the attacker's gap over any $z\in\past{b}$ follows an ordinary random walk with drift towards negative infinity (with different $z$'s in $\past{b}$ corresponding to possibly different starting points, all bounded together by $(p_n)$). The probability that the attacker advanced $m$ blocks during the period at which the honest network created $n$ confirmation-blocks is given by $\binom{m+n-1}{m}\cdot(1-\alpha)^n\cdot\alpha^m$. The event in which the attack will succeed is then equivalent to the event that the walk will ever arrive at $X=-1$ (here we used the restriction to $\gamma=0$). For a given pre-mined gap of size $l$, this happens with probability 1, if $m>n-l$, and with probability $\left(\frac{\alpha}{1-\alpha}\right)^{m-n-l+1}$, if $m\le n-l$ (this can be derived, e.g., using a martingale method. See also in~\cite{MENI}). Altogether, we have thus shown that $f(n,\alpha)$, as defined in Equation~\ref{eq::number_of_conf}, upper bounds the probability that $b$ will ever be reversed.
\end{proof}

\begin{lem}\label{lem::phase1}
For a fixed time $t'$, %and for $n=n\left(\alpha,\epsilon\right):=\ceil{\frac{\ln\left(\epsilon\cdot\frac{1-\alpha}{1-2\cdot\alpha}\right)}{\ln\left(\frac{\alpha}{1-\alpha}\right)}}$:
$$\Pr\left(\max_{z\in C^{pub}_{t'}}\left\{R^z_{t'}-Q^z_{t'}\right\} \ge n \right)\le \left(\frac{\alpha}{1-\alpha}\right)^n.$$
\end{lem}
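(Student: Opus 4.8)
The plan is to recognize the pre-mined gap as a reflected random walk and the law of $Y$ as that walk's geometric stationary law, so that the claimed bound $\ratio^{n}$ is simply its tail. Fix $t'$ and write $G_{t'}:=\max_{z\in C^{pub}_{t'}}\{R^z_{t'}-Q^z_{t'}\}$. The first step is to replace $G_{t'}$ by a length difference. Let $z^\star$ be the deepest block lying on both $C^{pub}_{t'}$ and $C^{oracle}_{t'}$. For $z$ weakly below $z^\star$, pushing $z$ up to $z^\star$ discards only blocks common to the two chains; these belong to $\future{z}\cap C^{pub}_{t'}$ but not to $\future{z}\cap C^{oracle}_{t'}\setminus C^{pub}_{t'}$, so $Q^z_{t'}$ decreases while $R^z_{t'}$ is unchanged, and $R^z_{t'}-Q^z_{t'}$ can only increase. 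For $z$ strictly above $z^\star$ on $C^{pub}_{t'}$ one has $z\notin C^{oracle}_{t'}$, hence $R^z_{t'}=0$ and $R^z_{t'}-Q^z_{t'}\le 0$, whereas at $z^\star$ the value is $\ge 0$ because $C^{pub}_{t'}$ is itself a chain of the tree. Thus the maximum is attained at $z^\star$ and equals (length of $C^{oracle}_{t'}$) minus (length of $C^{pub}_{t'}$); in particular $G_{t'}\ge 0$ always.

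Next I would track how $G$ changes across the two kinds of block creations, for an arbitrary attack policy, using that blocks are never removed and that under fast propagation every honest block extends the current longest published chain by exactly one. At an honest creation the length of $C^{pub}$ rises by $1$ and that of $C^{oracle}$ by $0$ or $1$; splitting on whether $G_{t^-}\le 1$ one checks that in fact $G_{t}=\max(G_{t^-}-1,0)$, regardless of the attacker's choices. At an attacker creation one block is added to the tree, so the length of $C^{oracle}$ rises by at most $1$ while that of $C^{pub}$ cannot fall, whence $G_{t}\le G_{t^-}+1$ (with equality for the attacker that silently extends its longest secret chain). Coupling on the common stream of creations, an easy induction gives $G_{t}\le W_{t}$ for all $t$, where $W$ is the walk started at $W_0=0$ with $W\mapsto W+1$ at each attacker creation and $W\mapsto\max(W-1,0)$ at each honest one, and where --- by the competition of the attacker's and honest nodes' Poisson clocks (equivalently, thinning of a rate-$\lambda$ process) --- each creation is an attacker creation independently with probability $\alpha$.

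It then remains to bound the tail of $W$ at time $t'$. By Lindley's recursion, after $N$ creations $W_N=\max_{0\le j\le N}(\xi_{j+1}+\dots+\xi_N)$, with the $\xi_i$ i.i.d., $\xi_i=+1$ with probability $\alpha$ and $-1$ otherwise, and with $N=N_{t'}$ (the number of blocks by $t'$) independent of $(\xi_i)$. Reversing the i.i.d.\ labels, $W_{N_{t'}}$ has the same law as $\max_{0\le j\le N_{t'}}S_j$ for the walk $S_0=0$, $S_j=\xi_1+\dots+\xi_j$, which is at most $\sup_{j\ge 0}S_j$. A one-step (gambler's-ruin) computation gives $\Pr(\sup_{j\ge 0}S_j\ge 1)=\alpha/(1-\alpha)$ --- the root in $[0,1)$ of $(1-\alpha)u^2-u+\alpha=0$ --- so $\Pr(\sup_{j\ge 0}S_j\ge n)=\ratio^{n}$ by the strong Markov property. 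Chaining the bounds, $\Pr(G_{t'}\ge n)\le\Pr(W_{t'}\ge n)\le\ratio^{n}$, which, since $\Pr(Y\ge n)=\sum_{k\ge n}\tfrac{1-2\alpha}{1-\alpha}\ratio^{k}=\ratio^{n}$, is precisely the first-order stochastic domination of $Y$ over $G_{t'}$ invoked in the theorem. (One could equally conclude by noting that $W$, started at its reflecting barrier, lies stochastically below its geometric stationary distribution for all times.)

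The step I expect to demand the most care is the second one. A real attacker may publish blocks selectively, reorganize the public chain, and keep several secret branches alive at different fork points, so the bookkeeping has to confirm --- via the length-difference reformulation of the first step --- that none of this ever lets the gap jump up by more than one per block nor prevents it from dropping by one on an honest block; this is exactly what makes $W$ a legitimate upper bound uniformly over all attack policies.
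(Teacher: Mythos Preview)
Your proof is correct and shares the paper's core insight: the pre-mined gap is governed by a reflected $\pm1$ random walk on the nonnegative integers whose tail is $\ratio^n$. The executions differ in two respects. First, the paper argues that the gap-maximizing attacker is the one who adopts whenever behind, and then shows by induction that under \emph{that} policy the gap coincides exactly with the reflected walk $Y_i$; you instead establish a coupling inequality $G_t\le W_t$ valid for \emph{every} attack policy, via your length-difference reformulation of $G$. Second, the paper computes the stationary law of the walk and simply asserts that it stochastically dominates the finite-time marginal; you invoke Lindley's representation $W_N\stackrel{d}{=}\max_{j\le N}S_j\le\sup_{j\ge0}S_j$ together with the gambler's-ruin tail $\Pr(\sup_j S_j\ge n)=\ratio^n$, which makes that domination explicit (indeed the stationary law of the reflected walk \emph{is} the law of $\sup_j S_j$). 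Your route is therefore more self-contained on precisely the step the paper leaves unproved, while the paper's route is shorter once one grants the domination claim. One minor imprecision in your write-up: at an honest creation you assert $G_t=\max(G_{t^-}-1,0)$, but if the attacker simultaneously publishes secret blocks the public chain may jump by more than one and you only get $G_t\le\max(G_{t^-}-1,0)$; this inequality is all the coupling needs, so the argument is unaffected.
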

\begin{proof} %[of Lemma~\ref{lem::phase1}]
%In order to override block $z_0$, the attacker must choose some block $z\in\past{z_0}$ above which to create its conflicting chain. The attacker's aim, in general, is to maximize its adavnace over the network's chain at $time(z_0)$. Thus, we wish to upper bou
%In general, a premining attack consists of an attempt  
If an attacker aims to maximize the value of $\max_{z\in C^{pub}_{t'}}\left\{R^z_{t'}-Q^z_{t'}\right\}$', then its optimal strategy is as follows: It begin mining at time $t=0$, right after the creation of the genesis block, and whenever $\sh<\sa$ it performs \ad, resetting the attack above the tip of $\sh$. To see that this is optimal, simply observe that if the honest network has a positive lead over the attacker at time $t$, then by adopting $z_{tip}$ the attacker will have at least as high as a gap (i.e., $R^z_{t'}-Q^z_{t'}$) over $z=z_{tip}$ than it would have had above any other $z\in\past{z_{tip}}$ had it not adopted $z_{tip}$.

Consider now a random walk on the non-negative integers with a reflecting barrier at the origin: At position $k$, the probability to move one step to the right is $\alpha$, and to the left is $(1-\alpha)$. At the origin, the probability to move to the right is $\alpha$ and to stay in place is $(1-\alpha)$. 
the transition probability matrix $P$ is accordingly. Denote by $Y_n:=\sum_{i=0}^{n}X_i$ the location of the walk after $n$ steps were made. The stationary distribution of the process $(Y_n)$ is $p_n:=\left(\frac{1-2\cdot\alpha}{1-\alpha}\right)\cdot \left(\frac{\alpha}{1-\alpha}\right)^{n}$, because if $Y$ is the distributed according to the limiting distribution then, for $n>0$: \begin{align*}
&\Pr\left(Y=n\right)=(1-\alpha)\cdot \Pr\left(Y=n+1\right)+\alpha\cdot\Pr\left(Y=n-1\right) = \\& (1-\alpha)\cdot p_{n+1}+\alpha\cdot p_{n-1} = (1-\alpha)\cdot \left(\frac{1-2\cdot\alpha}{1-\alpha}\right)\cdot \left(\frac{\alpha}{1-\alpha}\right)^{n+1} \\& +\alpha\cdot \left(\frac{1-2\cdot\alpha}{1-\alpha}\right)\cdot \left(\frac{\alpha}{1-\alpha}\right)^{n-1} = \ratio^n\cdot\frac{1-2\cdot\alpha}{1-\alpha}=p_n,
\end{align*}
and for $n=0$: $\Pr\left(Y=0\right)=(1-\alpha)\cdot \Pr\left(Y=0\right)+(1-\alpha)\cdot\Pr\left(Y=1\right)$, implying $\Pr(Y=0)=\frac{1-2\cdot \alpha}{1-\alpha}=p_0$.

Denote by $t_i$ the creation time of the $i$th block in $C^{oracle}_{t_{acc}}$. %, and let $z_i$ be an element in $\arg\max_{z\in C^{pub}_{t_i}}\left\{R^z_{t_i}-Q^z_{t_i}\right\}$.
We claim that $\max_{z\in C^{pub}_{t_i}}\left\{R^z_{t_i}-Q^z_{t_i}\right\}$ has the same probability distribution as $Y_i$. We prove it by an induction on $i$. For $i=0$, $t_0=0$. At time $0$, following the creation of the $genesis$ block, the value of $\max_{z\in C^{pub}_{t_i}}\left\{R^z_{t_i}-Q^z_{t_i}\right\}=\left(R^{genesis}_0-Q^{genesis}_0\right)$ is 0, as $\size{\future{genesis}\cap C^{oracle}_0}=0$; and likewise $Y_0=0$. Assume we have proved this for $i$, and we now prove it for $i+1$. With probability $\alpha$, the attacker creates the block at time $t_{i+1}$. In that case, it increases by 1 $R^{z}_{t_i}-Q^{z}_{t_i}$ for every $z\in C^{pub}_{t_i}$, and in particular $\max_{z\in C^{pub}_{t_i}}\left\{R^z_{t_i}-Q^z_{t_i}\right\}$ increases by 1; likewise, $Y_{i}$ increases by 1 with probability $\alpha$, since this is the probability of the $(i+1)$th step being towards positive infinity.
With probability $(1-\alpha)$, the honest network created the $(i+1)$th block. In that case, $R^z_{t_i}-Q^z_{t_i}$ decreases by 1 for every block $z\in C^{pub}_{t_i}$, whereas for the new block, $R^{z_{t_{i+1}}}_{t_i}-Q^{z_{t_{i+1}}}_{t_i}=0-0=0$. Thus, the value of $\max_{z\in C^{pub}_{t_{i+1}}}\left\{R^z_{t_i}-Q^z_{t_i}\right\}$ is the maximum between $\max_{z\in C^{pub}_{t_{i}}}$ $\left\{R^z_{t_i}-Q^z_{t_i}\right\} -1$ and 0. Similarly, if $Y_i>0$ then with probability $(1-\alpha)$ the $(i+1)$th step is towards negative infinity decreasing its value by 1, whereas if $Y_i=0$ then with probability $(1-\alpha)$ the value of $Y_{i}$ remains 0 at te $(i+1)$th step.

As argued above, the attacker does not lose anything from beginning its attack above the genesis block (recall its aim is to maximize the success-probability and not to minimize costs). Moreover, the process $(Y_n)$ forms an ergodic process: It is aperiodic because at the origin there's a positive probability to stay in place, and it is positive recurrent since the walk has a positive biased towards negative infinity. 
Moreover, it can be shown that the stationary distribution of this process stochastically dominates (first-order) the distribution of $Y=(Y_n)$: $\Pr\left(Y=n\right) =\left(\frac{1-2\cdot\alpha}{1-\alpha}\right)\cdot\left(\frac{\alpha}{1-\alpha}\right)^n$.  In particular, $\Pr\left(\max_{z\in C^{pub}_{t'}}\left\{R^z_{t'}-Q^z_{t'}\right\} \ge n \right)\le \left(\frac{1-2\cdot\alpha}{1-\alpha}\right)\sum_{k=n}^{\infty}\left(\frac{\alpha}{1-\alpha}\right)^k=\left(\frac{\alpha}{1-\alpha}\right)^n$.
\end{proof}

Note that the analysis above assumed $\gamma=0$, as the success probbaility of an attack in case of a tie was given by $\frac{\alpha}{1-\alpha}$.
The bound can be generalized to be valid for all $\gamma>0$ by waiting for an additional confirmation. This completes the description of the family $\sigma^{arb}_{\alpha,\gamma}$.

The formulas provided in the definition of $\sigma^{arb}_{\alpha}$ do not give a good feel for the security for the results. Following~\cite{MENI,SATOSHI}, we present the results in a table for some representative values.
Table~\ref{table::premining} illustrates the number of confirmations needed for an attacker of various sizes. This is to be contrasted with the table appearing in~\cite{MENI}, where the author assumed a constant pre-mining of 1 block before the transaction is transmitted (thus the analysis there is not of an arbitrarily timed transaction).

\begin{table*}[ht]
\caption{The probability of a successful attack on an arbitrary block ($\epsilon-arbitrary-robustness$), given the attacker's hashrate ($\alpha$) and the number of confirmations the acceptance policy waits for ($conf$). The calculation includes consideration for pre-mining.}
\label{table::premining}
\centering
{\ \ }\\
{\footnotesize
\begin{tabular}{|K{10mm}|K{10mm}|K{10mm}|K{10mm}|K{10mm}|K{10mm}|K{10mm}|K{10mm}|K{10mm}|K{10mm}|K{10mm}|}
\hline  $\alpha\backslash conf$   & 1       & 2       & 3         & 4         & 5         & 6         & 7         & 8         & 9         & 10        \\ \hline 
2\%  & 0.24\%  & 0.02\%  & $\approx$0\% & $\approx$0\% & $\approx$0\% & $\approx$0\% & $\approx$0\% & $\approx$0\% & $\approx$0\% & $\approx$0\% \\ \hline 
6\%  & 2.16\%  & 0.42\%  & 0.09\%    & 0.02\%    & $\approx$0\% & $\approx$0\% & $\approx$0\% & $\approx$0\% & $\approx$0\% & $\approx$0\% \\ \hline 
10\% & 5.98\%  & 1.85\%  & 0.60\%    & 0.20\%    & 0.07\%    & 0.03\%    & $\approx$0\% & $\approx$0\% & $\approx$0\% & $\approx$0\% \\ \hline 
14\% & 11.66\% & 4.88\%  & 2.11\%    & 0.93\%    & 0.42\%    & 0.19\%    & 0.09\%    & 0.04\%    & 0.02\%    & $\approx$0\% \\ \hline 
18\% & 19.13\% & 9.94\%  & 5.32\%    & 2.90\%    & 1.60\%    & 0.89\%    & 0.50\%    & 0.28\%    & 0.16\%    & 0.09\%    \\ \hline 
22\% & 28.27\% & 17.33\% & 10.89\%   & 6.95\%    & 4.48\%    & 2.91\%    & 1.91\%    & 1.25\%    & 0.83\%    & 0.55\%    \\ \hline 
26\% & 38.90\% & 27.17\% & 19.36\%   & 13.97\%   & 10.17\%   & 7.45\%    & 5.49\%    & 4.06\%    & 3.01\%    & 2.23\%    \\ \hline 
30\% & 50.70\% & 39.33\% & 30.98\%   & 24.64\%   & 19.73\%   & 15.88\%   & 12.84\%   & 10.41\%   & 8.46\%    & 6.89\%    \\ \hline 
34\% & 63.23\% & 53.37\% & 45.55\%   & 39.14\%   & 33.81\%   & 29.31\%   & 25.49\%   & 22.21\%   & 19.39\%   & 16.95\%   \\ \hline 
38\% & 75.80\% & 68.45\% & 62.25\%   & 56.85\%   & 52.09\%   & 47.85\%   & 44.03\%   & 40.58\%   & 37.45\%   & 34.56\%   \\ \hline 
42\% & 87.35\% & 83.09\% & 79.31\%   & 75.86\%   & 72.68\%   & 69.72\%   & 66.95\%   & 64.33\%   & 61.83\%   & 59.44\%   \\ \hline 
46\% & 96.26\% & 94.88\% & 93.61\%   & 92.41\%   & 91.27\%   & 90.17\%   & 89.10\%   & 88.05\%   & 86.99\%   & 85.82\%   \\ \hline 
48\% & 98.98\% & 98.59\% & 98.23\%   & 97.88\%   & 97.54\%   & 97.21\%   & 96.88\%   & 96.54\%   & 96.15\%   & 95.60\%   \\ \hline 
50\% & 100\%   & 100\%   & 100\%     & 100\%     & 100\%     & 100\%     & 100\%     & 100\%     & 100\%     & 100\%     \\ \hline 
\end{tabular}
}
\end{table*}

\section{Defending the long-term fraction of double spent transactions}

In this section we focus on finding the level of fractional-robustness of policies $\sigma^{frac}_{\alpha,\gamma}$ that wait for some given constant number of confirmations. We investigate what fraction of all blocks that the policy accepts are later overridden.

The robustness of $\sigma^{frac}_{\alpha,\gamma}$ is not derived analytically, but is rather computed by an algorithm that finds the optimal attack policy. To compute the optimal attack, we follow the technique introduced in~\cite{OPT_SELFISH_MINING}, that encodes the decision problem of an attacker as a sequence Markov Decision Problems (MDPs). These then encode the action that the attacker takes at each state: for every length of attacker chain $\sa$ and length of honest chain $\sh$ (measured from the block they fork at), the attacker needs to decide whether it continues to build atop his chain, abandons his efforts and starts a new fork, or publishes blocks to succeed by one (or, with less success-certainty, match) the length of the network's chain thereby overriding it (provided he has enough blocks to do this). The transition and reward matrices are summarized in Appendix~\ref{appendix::model}.

The main difference from the algorithm presented in~\cite{OPT_SELFISH_MINING} is that the latter used this technique to compute optimal selfish mining attacks, and to maximize the number of attacker blocks in the chain. In contrast, we reward the attacker differently (as its objective here is different): The attacker is rewarded 1 unit for every successful block that the network accepted (i.e., had enough confirmations) and that the attacker managed to later remove from the chain. This reward is normalized by the number of all accepted blocks. Due to this normalization, the output of this computation equals the expected number of attacker blocks over the expected total number of accepted blocks, which in turn equals the left-hand side of~(\ref{eq_def_fractional}).

Recall that the parameter $\gamma$ encodes the probability that a chain is overridden when it is matched in length. Since honest nodes adopt the first chains that ehy receive, in case of ties, the ability of the attacker to push his block first to a significant fraction of the nodes dictates the chances that the next block will be built on top of its chain.

\begin{figure}[ht]
    \centering
    \includegraphics[scale=0.85]{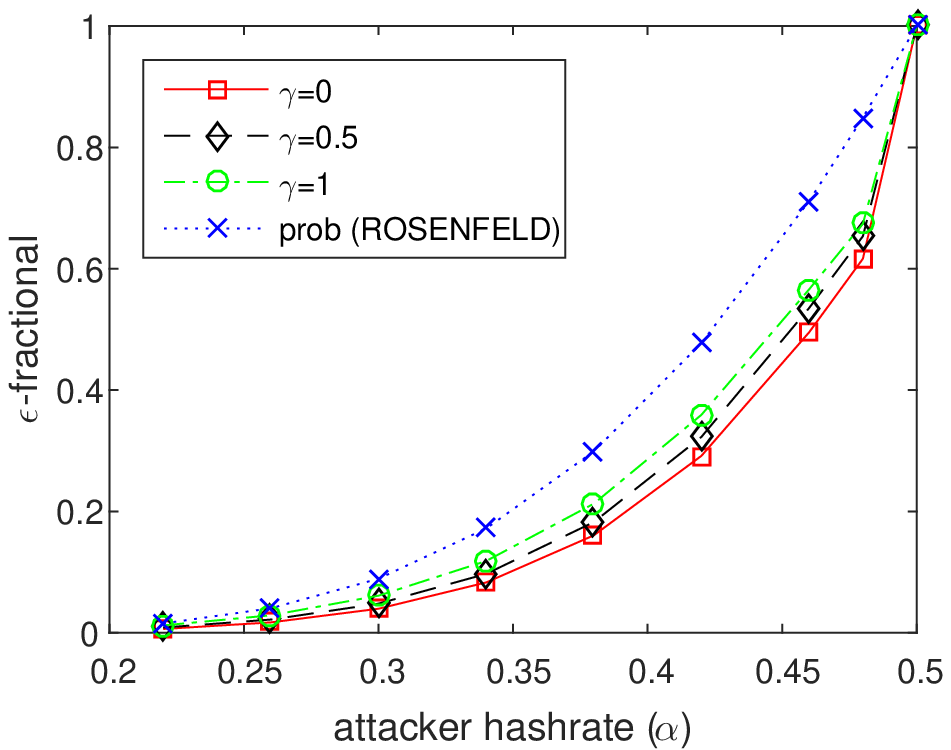}
  \caption{\label{fig::frac}The fraction of accepted blocks that an optimal attacker can double spend against a defender that uses 6 confirmations to accept as a function of the attacker's hashrate $\alpha$. The different curves correspond to different values of $\gamma$. Rosenfeld's result is also plotted for comparison.}
\end{figure}
Figure~\ref{fig::frac} depicts the results obtained for a policy with 6 confirmations, as computed on an MDP that was truncated to consider chains of length up to 60 blocks (the MDP analyzed in~\cite{OPT_SELFISH_MINING} is infinite and needs to be truncated for a numeric solution). The figure depicts the fraction of blocks an optimal attacker may double spend, for different values of $\gamma$. This essentially measures the $\epsilon$-robustness of the policy $\sigma\equiv6$.
The results of Rosenfeld for the probability of attack on a block~\cite{MENI} are included for comparison. It is interesting to note that the fraction of blocks that can be attacked is in fact lower than predicted by Rosenfeld (for any $\gamma$). This is because his analysis (and Satoshi's as well) consider an attack on a single block that goes on infinitely, that is, the attacker is assumed to never give up and to try to catch up with the chain no matter how far behind he is. In contrast, an attacker that aims to maximize the fraction of blocks it successfully attacks must occasionally give up and restart the attack if he is far behind. This effect is demonstrated in these results (note that, on the other hand, our model allows the attacker to double spend several blocks at once. These results demonstrate that the effective $\epsilon$ lowers nonetheless).

We similarly present the percentage of double spent blocks for different numbers of confirmations in Table~\ref{table::fractional}. Each cell was computed separately with its own optimal policy. 
\begin{table*}[ht]

\caption{The fraction of the network's blocks that an attacker with a given hashrate ($\alpha$) successfully attacks, when using an optimal attack policy, given the number of confirmations the acceptance policy waits for ($conf$).}
\label{table::fractional}
{\ \ }\\
\centering
{\footnotesize
\begin{tabular}{|K{10mm}|K{10mm}|K{10mm}|K{10mm}|K{10mm}|K{10mm}|K{10mm}|K{10mm}|K{10mm}|K{10mm}|K{10mm}|}
\hline
 $\alpha\backslash conf$     & 1       & 2           & 3           & 4           & 5           & 6           & 7           & 8           & 9           & 10          \\ \hline
2\%  & 0.08\%  & $\approx$ 0\% & $\approx$ 0\% & $\approx$ 0\% & $\approx$ 0\% & $\approx$ 0\% & $\approx$ 0\% & $\approx$ 0\% & $\approx$ 0\% & $\approx$ 0\%  \\ \hline
6\%  & 0.69\%  & 0.12\%      & 0.03\%      & $\approx$ 0\% & $\approx$ 0\% & $\approx$ 0\% & $\approx$ 0\% & $\approx$ 0\% & $\approx$ 0\% & $\approx$ 0\% \\ \hline
10\% & 1.89\%  & 0.52\%      & 0.16\%      & 0.05\%      & 0.02\%      & $\approx$ 0\% & $\approx$ 0\% & $\approx$ 0\% & $\approx$ 0\% & $\approx$ 0\% \\ \hline
14\% & 3.70\%  & 1.34\%      & 0.53\%      & 0.23\%      & 0.10\%      & 0.05\%      & 0.02\%      & $\approx$ 0\% & $\approx$ 0\% & $\approx$ 0\% \\ \hline
18\% & 6.16\%  & 2.75\%      & 1.34\%      & 0.69\%      & 0.36\%      & 0.20\%      & 0.11\%      & 0.06\%      & 0.04\%      & 0.02\%      \\ \hline
22\% & 9.37\%  & 4.92\%      & 2.80\%      & 1.66\%      & 1.02\%      & 0.64\%      & 0.41\%      & 0.27\%      & 0.18\%      & 0.12\%      \\ \hline
26\% & 13.47\% & 8.12\%      & 5.34\%      & 3.63\%      & 2.52\%      & 1.78\%      & 1.28\%      & 0.92\%      & 0.67\%      & 0.49\%      \\ \hline
30\% & 18.71\% & 13.20\%     & 9.63\%      & 7.19\%      & 5.48\%      & 4.23\%      & 3.30\%      & 2.60\%      & 2.07\%      & 1.66\%      \\ \hline
34\% & 26.46\% & 20.57\%     & 16.36\%     & 13.29\%     & 10.99\%     & 9.17\%      & 7.69\%      & 6.49\%      & 5.51\%      & 4.71\%      \\ \hline
38\% & 36.54\% & 31.04\%     & 26.95\%     & 23.60\%     & 20.77\%     & 18.37\%     & 16.39\%     & 14.66\%     & 13.16\%     & 11.84\%     \\ \hline
42\% & 50.32\% & 46.42\%     & 42.99\%     & 39.91\%     & 37.17\%     & 34.73\%     & 32.49\%     & 30.43\%     & 28.56\%     & 26.84\%     \\ \hline
46\% & 69.53\% & 67.65\%     & 65.84\%     & 64.06\%     & 62.33\%     & 60.64\%     & 59\%        & 57.38\%     & 55.79\%     & 54.23\%     \\ \hline
48\% & 81.48\% & 80.59\%     & 79.66\%     & 78.72\%     & 77.75\%     & 76.77\%     & 75.76\%     & 74.73\%     & 73.67\%     & 72.59\%     \\ \hline
50\% & 100\%   & 100\%       & 100\%       & 100\%       & 100\%       & 100\%       & 100\%       & 100\%       & 100\%       & 100\%       \\ \hline
\end{tabular}
}
\end{table*}

\subsection{Optimal policies}\label{subsec::optimal_policies} We now present the optimal policies returned by our algorithm, in two particular setups. Table~\ref{tbl::policy_26_0} describes the policy for an attacker with $\alpha=0.26, \gamma=0$ (here \ma is of no consequence). The row numbers correspond to the length of the attackers branch $\sa$ and the columns to the length of the honest network's branch $\sh$. 

Notice that here the attacker does not override the network's chain and receive rewards until its branch is of length three at least, as a successful attack requires the merchant sees three confirmations above its chain before the attack is released. Note, additionally, that the attacker does not give up on his attack when he is just slightly behind. If his chain is relatively long, he will not abandon it unless he is at least 3 blocks behind.

Table~\ref{tbl::policy_26_0.5} similarly corresponds to $\alpha=0.26, \gamma=0.5$. Each entry in it contains a string of three characters, corresponding to the possible status of the honest network: if it is working only on its own branch, if the attacker can possibly match the length of its branch and split its resources, and if it is already split between two branches of equal length ($fork$: $irrelevant,relevant,active$).\footnote{E.g., the string ``wm${*}$'' in entry $(\sa,\sh)=(3,3)$ reads: ``in case a fork is $irrelevant$ (that is, the previous state was $(2,3)$), $wait$; in case it is $relevant$ (the previous state was $(3,2)$), $match$; the case where a fork is already $active$ is not reachable''.} Actions are abbreviated to their initials: $\textbf{\emph{a}}dopt, \textbf{\emph{o}}verride, \textbf{\emph{m}}atch, \textbf{\emph{w}}ait$, while `$*$' represents an unreachable state.
%The table-entry $(\sa,\sh)$ contains three characters, specifying the actions to be taken in states $(\sa,\sh,irrelevant)$, $(\sa,\sh,relevant)$, and $(\sa,\sh,active)$ correspondingly. Table~\ref{tbl::policy_0.45_0.5} contains a description of an optimal policy, for an attacker with $\alpha=0.45, \gamma=0.5$. Table~\ref{tbl::policy_1/3} describes optimal actions for the setup $\alpha=1/3, \gamma=0$. Notice that in the latter the \ma action is irrelevant, which allows us to regard in the second table only states with $fork=irrelevant$. In both tables only a subset of the states is depicted, the whole space being infinite.

\begin{table}[ht]
\caption{Optimal actions for an attacker with $\alpha=0.26,\gamma=0$, against a policy that waits for two confirmations, when the merchant accepts transactions after 2 confirmations. The row and column indices correspond to $\sa$ and $\sh$, respectively. 
	Actions are: $\textbf{\emph{a}}dopt, \textbf{\emph{o}}verride, \textbf{\emph{m}}atch, \textbf{\emph{w}}ait$, or `$*$' unreachable.}
 \label{tbl::policy_26_0}
{\ \ }\\
\centering
\begin{tabular}{|c|c|c|c|c|c|c|c|c|c|c|c|c|}
\hline
$\sa\backslash\sh$  & 0 & 1 & 2 & 3 & 4 & 5 & 6 & 7 & 8 & 9 & 10 \\ \hline
0& w & a & $*$ & $*$ & $*$ & $*$ & $*$ & $*$ & $*$ & $*$ & $*$\\ \hline
1& w & w & w & a & $*$ & $*$ & $*$ & $*$ & $*$ & $*$ & $*$\\ \hline
2& w & w & w & w & a & w & a & $*$ & $*$ & $*$ & $*$\\ \hline
3& w & w & o & w & w & w & w & a & $*$ & $*$ & $*$\\ \hline
4& w & w & w & o & w & w & w & w & a & $*$ & $*$\\ \hline
5& w & w & w & w & o & w & w & w & w & a & $*$\\ \hline
6& w & w & w & w & w & o & w & w & w & w & a\\ \hline
7& w & w & w & w & w & w & o & w & w & w & w\\ \hline
8& w & w & w & w & w & w & w & o & w & w & w\\ \hline
9& w & w & w & w & w & w & w & w & o & w & w\\ \hline
10 & w & w & w & w & w & w & w & w & w & o & w  \\ \hline
\end{tabular}
\end{table}

%[40mm] %

\begin{table}[ht]
\centering
\caption{Optimal actions for an attacker with $\alpha=0.26, \gamma=0.5$, for states $(\sa,\sh,\cdot)$ with $\sa,\sh\le 6$, when the merchant accepts transactions after 2 confirmations.
The row and column indices correspond to $\sa$ and $\sh$, respectively.
Actions are: $\textbf{\emph{a}}dopt, \textbf{\emph{o}}verride, \textbf{\emph{m}}atch, \textbf{\emph{w}}ait$, or `$*$' unreachable. The three entries at each cell are: $fork$: $irrelevant,relevant,active$}
\label{tbl::policy_26_0.5}

{\ \ }\\
\centering
\begin{tabular}{|c|c|c|c|c|c|c|c|}
% \backslashbox{$\sa$\kern-0.1em}{\kern-0.1em$\sh$} 
\hline $\sa\backslash\sh$ & 0   & 1   & 2   & 3   & 4   & 5   & 6    \\ \hline
0  & w$**\!$& aa$*\!$& $*\!*\!*\!$& $*\!*\!*\!$&$*\!*\!*\!$&$*\!*\!*\!$&$*\!*\!*\!$\\  \hline
1  & w$**\!$&$*$w$*\!$& w$**\!$& a$**\!$&$*\!*\!*\!$&$*\!*\!*\!$&$*\!*\!*\!$\\ \hline
2  & w$**\!$& ww$*\!$& wm$*\!$& w$**\!$& a$**\!$&$*\!*\!*\!$&$*\!*\!*\!$\\ \hline
3  & w$**\!$& ww$*\!$& www & wm$*\!$& w$**\!$& a$**\!$&$*\!*\!*\!$ \\ \hline
4  & w$**\!$& ww$*\!$& www & wmw & wm$*\!$& w$**\!$& a$**\!$ \\ \hline
5  & w$**\!$& ww$*\!$& www & www & omw & wm$*\!$& w$**\!$ \\ \hline
6  & w$**\!$& ww$*\!$& www & www &$*$mw & omw & wm$*\!$ \\ \hline
%7  & w$**\!$& ww$*\!$& www & www &$*\!$ww &$*\!$mw & omw & wm$*\!$& w$**\!$& w$**\!$& a$**\!$\\ \hline
%8  & w$**\!$& ww$*\!$& www & www & www &$*\!$ww &$*\!$mw & oow & wm$*\!$& w$**\!$& w$**\!$\\ \hline
%9  & w$**\!$& ww$*\!$& www & www & www & www &$*\!$ww &$*\!$mw & oow & wm$*\!$& w$**\!$\\ \hline
%10 & w$**\!$& ww$*\!$& www & www & www & www & www &$*\!$ww &$*\!$mw & ooo & w$**\!$\\ \hline
\end{tabular}
\end{table}

\section{Logarithmic waiting time}\label{sec::logarithmic}
For every given acceptance policy one could ask what is the probability that at least one attack, in the course of the entire history, will be successful. This notion is formalized by $\epsilon$-total-robustness, in Definition~\ref{def_total}.
As discussed above, for any acceptance policy of the form $\sigma\equiv k$, for some constant $k$, the probability that a single attack on $\C^t$ will be successful goes to 1 as $t$ goes to infinity. Observe that achieving an arbitrary low $\epsilon$-total-robustness is trivially achievable by never accepting any transaction. Fortunately, below we show that there exists an $\epsilon$-totally-robust policy, for any $\epsilon>0$ which accepts every transaction in the blockchain after a time logarithmic in the chain's current length, as long as the block containing it still belongs to the longest chain. This result motivated the modeling of acceptance policies as taking the height of the block as an argument: It shows that considering policies not constant in the block height open up the option of achieving a strong security property unachievable otherwise.

\begin{theorem}
For any $\epsilon>0$, the policy $$\sigma^{total}_{\alpha,\gamma}\left(h\right):=C_{\alpha,\epsilon}+\lfloor\log_{b_{\alpha}}\left(h\right)\rfloor$$ is $\epsilon$-totally-robust, where  $C_{\alpha,\epsilon}:=\ceil*{\frac{1}{c}\cdot \ln\left(\frac1\epsilon\cdot b_{\alpha}\cdot\left(1-e^{-c}\right)^{-1}\cdot\left(1-b_{\alpha}/(e^c)\right)^{-1}\right)}$, $b_{\alpha}:=\frac{e^c+1}{2}$, with $c:=\frac18\cdot\frac{(1-2\cdot\alpha)^2}{1-\alpha}$.
\end{theorem}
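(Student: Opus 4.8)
The plan is to bound the probability that \emph{some} block in the chain is ever double-spent by a union bound over all heights $h$, and to show that the per-height failure probability decays fast enough in $h$ that the series converges below $\epsilon$. Fix a block $b$ at height $h$ that the policy accepts; by the definition of $\sigma^{total}$, it is accepted only once it has $n := C_{\alpha,\epsilon}+\lfloor\log_{b_\alpha}(h)\rfloor$ confirmations. For $b$ to be attacked, the attacker must, at some point after $b$ gets its $n$ confirmations, present a chain forking below $b$ that is longer than the honest chain through $b$. As in the proof of the previous theorem, we can absorb the attacker's pre-mined advantage below $\past{b}$ into the random variable $Y$ with $\Pr(Y=l)=\tfrac{1-2\alpha}{1-\alpha}\ratio^l$ of Lemma~\ref{lem::phase1}, and then the residual race — after $b$ has $n$ confirmations — is an ordinary random walk with negative drift $(1-2\alpha)$ per honest block. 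The key quantitative input is a Chernoff/large-deviations bound on this biased random walk: the probability that a walk starting at deficit $n-l$ ever reaches $-1$ is at most $e^{-c'(n-l)}$ for a suitable constant $c'>0$ depending only on $\alpha$. The constant $c=\tfrac18\cdot\tfrac{(1-2\alpha)^2}{1-\alpha}$ in the statement is exactly (a convenient lower bound on) the exponential decay rate one obtains from optimizing the moment generating function of the step distribution, or equivalently from a Hoeffding-type bound on the number of honest versus attacker blocks; I would verify the exponent by writing the walk increments as $\pm 1$ with probabilities $1-\alpha,\alpha$ and using $\E[e^{-s X}]\le e^{-cs'\cdots}$ for an appropriate $s$, or simply cite~\cite{MENI} for the closed form and bound it.

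Concretely, I would carry out the following steps. First, fix $h$ and let $A_h$ be the event that the block accepted at height $h$ is later removed from the longest chain; the goal is $\sum_{h\ge 1}\Pr(A_h)<\epsilon$. Second, condition on the pre-mined gap $Y=l$ (dominated as above), so that $\Pr(A_h\mid Y=l)$ is at most the probability that a negatively-biased $\pm1$ walk started at height $n-l$ (with $n=\sigma^{total}(h)$) ever hits $0$, which I bound by $(b_\alpha/e^c)^{\,n-l}$ — here the base $b_\alpha=(e^c+1)/2<e^c$ arises because the per-step contraction factor after applying the exponential-moment inequality is $b_\alpha/e^c<1$. Third, sum over $l$: using $\Pr(Y=l)=\tfrac{1-2\alpha}{1-\alpha}\ratio^l$ and that this is itself bounded by a geometric series, the convolution $\sum_{l\ge 0}\Pr(Y=l)\,(b_\alpha/e^c)^{n-l}$ telescopes to something of the form $\mathrm{const}(\alpha)\cdot (b_\alpha/e^c)^{\,n}\cdot\bigl(1-b_\alpha/e^c\bigr)^{-1}$ — which is where the factors $b_\alpha$, $(1-e^{-c})^{-1}$, $(1-b_\alpha/e^c)^{-1}$ in $C_{\alpha,\epsilon}$ come from. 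Fourth, substitute $n=C_{\alpha,\epsilon}+\lfloor\log_{b_\alpha}h\rfloor$ so that $(b_\alpha/e^c)^{n}=b_\alpha^{\,n}\cdot e^{-cn}\approx h\cdot b_\alpha^{C_{\alpha,\epsilon}}e^{-cC_{\alpha,\epsilon}}\cdot e^{-c\log_{b_\alpha}h}$; the point of choosing base $b_\alpha$ in the logarithm is that $b_\alpha^{\lfloor\log_{b_\alpha}h\rfloor}\le h$ cancels against a stray factor, leaving a bound of the form $\Pr(A_h)\le \epsilon\cdot(1-e^{-c})\cdot h^{-(1+c/\ln b_\alpha)}$ or, more simply, $\Pr(A_h)\le \epsilon\cdot(1-e^{-c})\cdot e^{-c\lfloor\log_{b_\alpha}h\rfloor}$. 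Fifth, sum the geometric-type series $\sum_{h\ge1}\Pr(A_h)$; the definition of $C_{\alpha,\epsilon}$ is calibrated precisely so that this sum is at most $\epsilon$, and a union bound finishes.

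There is one modeling subtlety that needs care and which I expect to be the main obstacle: the events $A_h$ for different $h$ are not independent, and moreover a single released attack chain can remove many consecutive blocks at once, so one must be careful that "$\exists b:\mathcal E_{attacked}(b)$" is genuinely covered by $\bigcup_h A_h$ with $A_h$ defined via the height of the \emph{accepted} block. I would handle this by noting that if any attack ever succeeds, then there is a \emph{lowest} removed block $b^\star$, it sits at some height $h^\star$, it must have been accepted (hence had $\sigma^{total}(h^\star)$ confirmations) before removal, and the attacker's chain forking at or below $b^\star$ must have overtaken the honest chain through $b^\star$ at that time — which is exactly the event $A_{h^\star}$. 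So the union bound is legitimate without needing independence. The remaining work is then purely the random-walk large-deviation estimate and the bookkeeping of constants; the $\gamma>0$ case is absorbed, exactly as in the previous section, by adding one extra confirmation (which only improves the bound and can be folded into $C_{\alpha,\epsilon}$), and I would remark on this rather than re-derive it. The only genuinely delicate inequality is establishing the clean exponential bound $e^{-c(n-l)}$ with the stated $c$ uniformly over the relevant range of $l$ and over the (adversarially chosen) stopping time at which the attack is released; a martingale/optional-stopping argument on $e^{-s S_k}$ for the walk $S_k$, with $s$ chosen so that $e^{-s}$ times the step MGF equals $1$, gives the bound with the true rate, and then bounding that rate below by $c$ is a one-line convexity estimate.
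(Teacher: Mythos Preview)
Your high-level plan --- union bound over heights, exponential per-height failure bound, sum a convergent series --- is right in spirit, but steps 2--4 misidentify where $b_\alpha$ comes from, and the resulting series diverges. The constant $b_\alpha=(e^c+1)/2$ is purely the base of the logarithm in $\sigma^{total}$; equivalently, it fixes the geometric ``epochs'' $[b_\alpha^i,b_\alpha^{i+1})$ on which $\sigma^{total}$ is constant. It is \emph{not} a random-walk parameter, and there is no derivation of a catch-up bound of the form $(b_\alpha/e^c)^{n-l}$. If you carry your claimed bound $\Pr(A_h)\lesssim(b_\alpha/e^c)^{\sigma(h)}$ through and group $h\in[b_\alpha^i,b_\alpha^{i+1})$, you get $\sum_i b_\alpha^{i+1}(b_\alpha/e^c)^{C+i}\propto\sum_i(b_\alpha^2/e^c)^i$, and since $(e^c+1)/2\ge e^{c/2}$ by AM--GM this diverges; the ``stray factor of $h$'' you hope will cancel is genuinely there and does not cancel. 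Separately, routing through the pre-mining distribution $Y$ of Lemma~\ref{lem::phase1} --- even if done correctly --- would produce different constants and would not establish the theorem with the stated $C_{\alpha,\epsilon}$.

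The paper's argument dispenses with pre-mining entirely for this theorem. It views the attacker as running a sequence of \emph{attack attempts}, each starting at state $(0,0,h)$: after an \ad the attacker restarts at zero gap, and one makes the attacker-favouring simplification that an \ad from $(\sa,\sh,h)$ transitions to $(0,0,h+1)$ rather than $(0,0,h+\sh)$, so there is at most one attempt per height. For the attempt at height $h$ to succeed, at some total step count $k\ge\sigma(h)$ one needs $\sa\ge\sh$; a direct Chernoff bound on the biased $\pm1$ walk gives $\Pr(\sa\ge\sh\text{ at step }k)\le e^{-ck}$ with exactly the stated $c$, and summing over $k\ge\sigma(h)$ yields the per-attempt bound $e^{-c\sigma(h)}(1-e^{-c})^{-1}$ --- this, a sum over the unknown stopping time of the attack, is the origin of the factor $(1-e^{-c})^{-1}$, not a sum over a pre-mining lead $l$. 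A union bound over attempts, grouped into epochs, then gives $\sum_i p_i\le b_\alpha(1-e^{-c})^{-1}e^{-cC_{\alpha,\epsilon}}\sum_i(b_\alpha/e^c)^i$, convergent since $b_\alpha<e^c$; this outer geometric sum is where $(1-b_\alpha/e^c)^{-1}$ comes from, and the choice of $C_{\alpha,\epsilon}$ makes the total $<\epsilon$. Finally, because the Chernoff step controls $\Pr(\sa\ge\sh)$ (not merely strict inequality), the bound already grants the attacker every tie and hence covers all $\gamma\in[0,1]$ with the constants as stated --- no ``$+1$ confirmation'' adjustment is used.
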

\begin{proof}
\emph{Part I:}
Let us write $(\sa,\sh,h)$ whenever the attacker's chain is $\sa$ blocks long, the honest network's chain is \sh blocks long, and the earliest block in the chain of the network (i.e., the $\sh$-th block from the tip of the honest chain) is of height $h$.
By definition, the attacker can perform a successful attack iff $\sa\ge \sh$ and $\sh\ge \sigma^{total}_{\alpha,\gamma}\left(h\right)$. Assume now that if the attacker performs \ad at some state $(\sa,\sh,h)$ then the process transits to state $(0,0,h+1)$ (instead of $(0,0,h+\sh)$).\footnote{Technically, transiting to $(0,0,h+1)$ is realized by transiting to $(1,0,h+1)$ with probability $\alpha$ or to $(0,1,h+1)$ with probability $(1-\alpha)$.\label{footnote::technical}}
That this assumption works in favour of the attacker is clear: When $h'<h$, state $(\sh,\sa,h')$ is always preferable by the attacker over state $(\sh,\sa,h)$, for any $\sa,\sh$. Indeed, these two states differ only in that a successful attack in the latter state implies a successful attack in the former as well (but not necessarily vice versa), since the condition $\sh\ge \sigma^{total}_{\alpha,\gamma}\left(h\right)$ is stronger than $\sh\ge \sigma^{total}_{\alpha,\gamma}\left(h'\right)$. In particular, $(0,0,h+1)$ is preferable to the attacker over $(0,0,h+\sh)$.

Define $h_i=b^i\;(i=0,1,2....)$. Below we abbreviate $\sigma=\sigma^{total}_{\alpha,\gamma}$.
We define the $i$th epoch by the set of states with $h_i\le h<h_{i+1}$. The sequence $(h_i)$ satisfies the property that $\sigma\left(h_{i+1}\right)-1=\sigma\left(h_{i+1}-1\right)=\dots=\sigma\left(h_{i}\right)=C_{\alpha,\epsilon}+i$.
Let $p_i$ denote the probability that the attacker manages to perform a successful attack on a block belonging to the $i$th epoch. Suffice it to show that $\sum_{i=1}^\infty p_i<\epsilon$.

By definition, every attack in the epoch between $h_i$ and $h_{i+1}$ begins at a state of the form $(0,0,h)$ with $h_i\le h<h_{i+1}$ (an attack begins after the attacker abandoned its previous attempt, by an \ad, which leads to this state). A successful attack on block $u=C^t_h$ in the $i$th epoch can be reached only after at least $\sigma\left(h_i\right)$ blocks were built by the honest network above $b$ (as it requires $\sh\ge\sigma\left(h_i\right)$), including $u$, and in particular, after at least $\sigma\left(h_i\right)$ blocks were built since the creation of $u$'s predecessor (counting the blocks of both parties, and excluding $u$'s predecessor). For any number of steps $k\ge\sigma\left(h_i\right)$, the probability that after precisely $k$ steps $\sa\ge\sh$ is at most $e^{-c\cdot k}$: By putting $Z_i:=(X_i-1)/-2$, we arrive at a sequence of i.i.d random variables that take values in $\left\{0,1\right\}$. The event $\sum_{i=1}^{k}X_i\ge 0$ is then equivalent to $\sum_{i=1}^{k}Z_i\le k/2$, with $\mathbb{E}\left[\sum_{i=1}^{k}Z_i\right]=(1-\alpha)\cdot k$. We then apply to the latter sum Chernoff's bound: $\Pr\left(Z\le(1-\delta)\cdot \E\left[Z\right]\right)$, with $Z=\sum_{i=1}^{k}Z_i$ and $\delta:=\frac12\cdot\frac{1-2\alpha}{1-\alpha}$.

However, since the number of steps $k$ is not known in general, we upper bound the success probability of an attack on $u$ by $\sum\limits_{k=\sigma\left(h_i\right)}^{\infty}e^{-c\cdot k}=e^{-c\cdot \sigma\left(h_i\right)}\cdot\left(1-e^{-c}\right)^{-1}$.
By the union bound, the probability $p_i$ of a successful attack during the $i$th epoch can be upper bounded by $(h_{i+1}-h_i)\cdot e^{-c\cdot \sigma\left(h_i\right)}\cdot\left(1-e^{-c}\right)^{-1} < h_{i+1}\cdot e^{-c\cdot \sigma\left(h_i\right)}\cdot \left(1-e^{-c}\right)^{-1}=b^{i+1}\cdot e^{-c\cdot \sigma\left(h_i\right)}\cdot \left(1-e^{-c}\right)^{-1}$.

\emph{Part II:}
In order to upper bound the probability that there exists an epoch with a successful attack we apply the union bound on the entire sequence of epochs: 
\begin{align}
& \sum\limits_{i=0}^{\infty}p_i\le \left(1-e^{-c}\right)^{-1}\cdot\sum\limits_{i=1}^{\infty} b_{\alpha}^{i+1}\cdot e^{-c\cdot \sigma\left(l_i\right)} = \\&
b_{\alpha}\cdot\left(1-e^{-c}\right)^{-1}\cdot \sum\limits_{i=1}^{\infty} b_{\alpha}^i\cdot e^{-c\cdot (C_{\alpha,\epsilon}+i)}=
\\& b_{\alpha}\cdot\left(1-e^{-c}\right)^{-1}\cdot e^{-c\cdot C_{\alpha,\epsilon}}\cdot \sum\limits_{i=1}^{\infty} \left(b_{\alpha}/\left(e^c\right)\right)^i= \\ &
\label{eq::geometricsum}b_{\alpha}\cdot\left(1-e^{-c}\right)^{-1}\cdot e^{-c\cdot C_{\alpha,\epsilon}}\cdot \sum\limits_{i=1}^{\infty} (b_{\alpha}/e^c)^i <  \\ & \left(b_{\alpha}\cdot\left(1-e^{-c}\right)^{-1}\cdot\left(1-b_{\alpha}/(e^c)\right)^{-1}\right) 
\cdot e^{-c\cdot C_{\alpha,\epsilon}}<\epsilon.
\end{align}
The last inequality holds by the choice of $C_{\alpha,\epsilon}$, whereas the geometric series in~(\ref{eq::geometricsum}) converges due to $b_{\alpha}<e^c$, by the choice of $b_{\alpha}$.
\end{proof}

Observe that our analysis allowed the attacker to \ov whenever the random walk visits the zero (as we have bounded the probability that $\sum_{i}X_i\ge 0$). Consequently, this bound applies to an attacker of any $\gamma$-value, as it already assumes that the attacker is always able to \ma successfully.

\section{The Generalized Vector76 pre-mining attack}\label{sec::vec76}
In this section we present the generalized Vector76 attack (the original attack was suggested by a user named Vector76 in the bitcoinTalk forums to possibly explain a successful double spending attack against the MyBitcoin e-wallet~\cite{Vector76}).

The attack is a form of pre-mining attack that the attacker can work on in secret until he is guaranteed to be successful. In this case as well, hybrid methods that trade off a shorter preparation time in exchange for lower success probabilities exist. It further assumes that the victim is unable to relay blocks to the main network, e.g., if he uses a light weight client that receives cryptographic proof of the attack, but not the blocks themselves. The attack is then \emph{easier} to execute compared to a regular pre-mining attack, since it requires the attacker to generate less of a lead on the honest network, and in fact, in a reverse twist, relies upon the network to confirm the double spending payment.

The important aspect of this attack is that it draws a clear distinction between full nodes and light node implementations that do not relay blocks, and hence demonstrates that light nodes need in fact to wait for additional confirmations to be equally secure to full nodes.

The attack proceeds as follows:
\begin{enumerate}
	\item The attacker starts working on a secret branch of the chain. It embeds the transaction $tx_1$ (that it later wishes to reverse) in this block. 
	\item If the defender requires $\sigma\equiv k$ confirmations, the attacker needs to build an additional $k-1$ blocks on top of the one containing $tx_1$ (for a total of $k$ confirmations). He attempts to do so, in secret.
	\item If his branch of the chain is longer than that of the honest chain, at some point \emph{after} he has $k$ confirmations for $tx_1$, then it shows the $k$ confirmations to the lightweight client which then that accepts it as the legitimate chain, since it is the longest one.
	\item The attacker then transmits a conflicting transaction $tx_2$ to the honest network. As the honest network is not aware of the attacker's chain, the former's chain will grow long enough for $tx_2$ to be accepted by all nodes (and eventually even the attacked one). 
\end{enumerate}

Figure~\ref{fig::vector76_attack} depicts the attack. Again, notice that a crucial stage in the success of the attack is that the honest network does not adopt the block containing $tx_1$.

\begin{figure*}
	{\centering
		\includegraphics[scale=0.5]{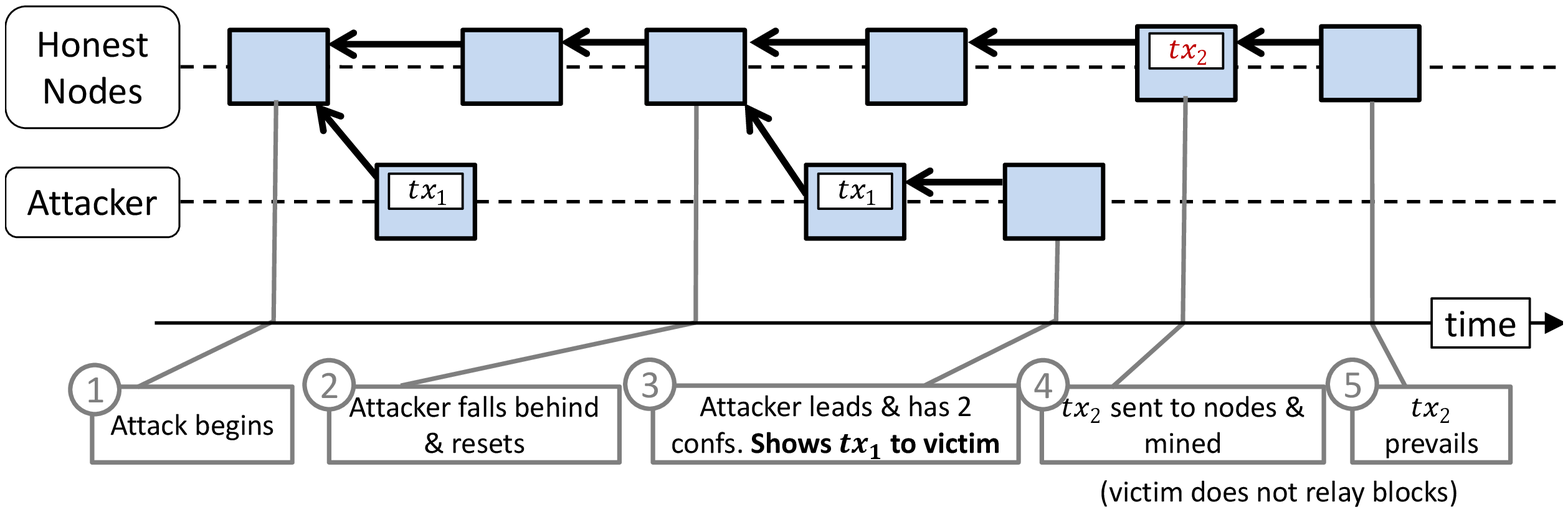}
	}	
	
	\emph{As the attack begins the attacker starts working on a secret chain with $tx_1$ inside its first block (1). If the attacker's chain is too far behind it may restart the attack (2). The attacker manages to gain a lead of 1 blocks, but has the two confirmations on his $tx_1$ needed to convince the victim (3). He then reveals the secret chain to the victim (that does not relay it), and collects an item in exchange. He then transmits the double spending transactions $tx_2$ to the network which is then included in a block (4).  The network continues to mine atop $tx_2$ and it eventually prevails (5).}
	\caption{\label{fig::vector76_attack} The progression of a generalized Vector76 attack on a 2-confirmation defender}
\end{figure*}

\myitem{Difficulty of the attack} The requirement for success in the pre-mining phase (after which the attack succeeds with probability 1) is that after constructing $k$ blocks or more, the attacker has a lead of 1 block over the network's chain. In a successful regular attack (whether it contains pre-mining or not), against a similar $k$-confirmation defender, the network has constructed $k$ blocks on top of the transaction (including the block that includes it), and at some point the attacker succeeds to lead over the network, and he thus has at least $k$ such blocks of his own in his branch. It is therefore easy to see that events in which successful double spending attacks occur are strictly contained in events in which the  generalized Vector76 attack is successful (and succeeds with probability 1). The above argument thus shows that light nodes are strictly less secure than regular nodes, and need to wait for more confirmations (we include below an analysis that quantifies the effect of waiting longer). 

In contrast, a regular (generalized Finney) pre-mining phase that
leads to a successful attack with probability 1 has much stricter requirements: the attacker needs to lead by $k+1$ blocks over the network (again, here he can launch a Vector76 attack as well, since he has built at least $k$ blocks and leads over the network).

\myitem{Resets and attacker strategy}
The generalized Vector76 attack described above can in fact be improved if attackers pick a better policy regarding restarts of  the attack. This policy can again be found by solving MDPs that reward successful attacks. We leave this for future work.

\myitem{The Vector76 attack against full nodes} The Vector 76 attack can also be applied to full nodes if the attacker can somehow manage to send his chain with $k$ confirmations to the victim while a similar length chain is being propagated through the network. In this case, even once the defender relays the block, the network will not adopt it, since it is of equal height to the one created by the attacker. The attacker naturally needs to time the transmission right, and if he misses, his original block may be adopted by some fraction of the honest nodes (a model with a parameter similar to $\gamma$ that is used in selfish mining may capture this). 

\myitem{Analytical guarantees of security}
Below we provide an analysis of the security of lightweight nodes against the generalize Vector76 attack. As the transactions that the node accepts are conditioned 
on the attack's current state, the arbitrary-block security guarantee does not apply. Fortunately, we can upper bound the safety level against any attack, as follows. 
The policy defined below is of the form $\sigma^{spv}_{\alpha}=\sigma^{spv}_{\alpha,\gamma}$, and applies for all $\gamma$.

\begin{defn}\label{def::vec76defense}
	Let $$\sigma^{spv}_{\alpha}:=\min\left\{k\in\mathbb{N} : g(k,\alpha)<\epsilon\cdot(1-\alpha)\right\},$$ where
\begin{align}\label{eq::vec76defense}
g(k,\alpha):=&\sum\limits_{l=0}^{\infty}\frac{1-2\cdot\alpha}{1-\alpha}\cdot\left(\frac{\alpha}{1-\alpha}\right)^l\cdot \\ \nonumber &\left(\sum\limits_{n=0}^{k+l}\binom{n+k-1}{n}\cdot\alpha^k\cdot(1-\alpha)^n+\right. \\ &\left.\sum\limits_{n=k+l+1}^{\infty}\binom{n+k-1}{n}\cdot\alpha^{n-l}\cdot(1-\alpha)^{k+l}\right) \label{eq::defened76}.
\end{align}
\end{defn}

\begin{theorem}
For any $\epsilon>0$, the policy $\sigma^{spv}_{\alpha}$ is $\epsilon$-fractional-robust.
\end{theorem}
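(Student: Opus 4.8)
The plan is to run the same machinery as in the proof for $\sigma^{arb}_\alpha$, but with the (weaker) Vector76 success condition in place of the Finney one, and then to cash the resulting per-attack bound out as a statement about the long-run fraction of reversed blocks. First I would invoke Lemma~\ref{lem::phase1}: at any fixed time the attacker's best pre-mined lead $\max_{z}\{R^z-Q^z\}$ over the fork point of its secret branch is stochastically dominated by the geometric variable $Y$ with $\Pr(Y=l)=\frac{1-2\cdot\alpha}{1-\alpha}\cdot\left(\frac{\alpha}{1-\alpha}\right)^l$, which is exactly the outer weight in~(\ref{eq::vec76defense}). Here $l$ counts the blocks the attacker has secretly stacked \emph{below} the block carrying $tx_1$. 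The key difference from the Finney analysis is the victory condition: since the SPV victim does not relay blocks and simply adopts the longest chain it is shown, a generalized Vector76 attack against a $\sigma\equiv k$ client succeeds as soon as the attacker has mined $tx_1$ together with $k-1$ confirmations on top of it and, at some instant at or after that, his secret branch is at least as long as the honest chain at the fork; as in the remark following the $\sigma^{total}_{\alpha,\gamma}$ proof I would count a tie (``match'') as a win, which makes the bound hold for every $\gamma$.

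Next I would condition on the pre-mined lead $l$ and on the number $n$ of honest blocks produced while the attacker mines the $k$ blocks of the $tx_1$-segment. The chance that the attacker's $k$-th such block is created when the honest network holds exactly $n$ blocks is the negative-binomial weight $\binom{n+k-1}{n}\cdot\alpha^k\cdot(1-\alpha)^n$ (the last of the $n+k$ blocks being the attacker's). If $n\le k+l$ the attacker already weakly leads, so the attack succeeds outright --- this is the first inner sum of~(\ref{eq::vec76defense}). If $n>k+l$ the attacker trails by $n-k-l$ and, continuing to mine, catches up to a tie with probability $\left(\frac{\alpha}{1-\alpha}\right)^{n-k-l}$ by the usual martingale/gambler's-ruin argument for a walk with negative drift; multiplying the negative-binomial weight by this factor converts $\alpha^k\cdot(1-\alpha)^n$ into $\alpha^{n-l}\cdot(1-\alpha)^{k+l}$, which is precisely the second inner sum. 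Averaging over $l$ with the weights $\Pr(Y=l)$ shows that any single Vector76 attack mounted at a fixed time succeeds with probability at most $g(k,\alpha)$.

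Finally I would pass from one attack to $\epsilon$-fractional-robustness in the sense of Definition~\ref{def_fractional}, read for the SPV client's view rather than for $\C^t$. The observation that does the work is that only the block carrying $tx_1$ is ever \emph{accepted} under a $\sigma\equiv k$ policy (none of the other secret blocks ever accumulates $k$ confirmations of its own), so the numerator $\sum_{b\in\T^{t}}\Pr(\mathcal{E}_{attacked}(b))$ equals the expected number of \emph{successful} Vector76 attacks up to time $t$. Each such attack (successful or not, and including its pre-mined part) consumes at least $k$ attacker blocks, while over a long horizon the honest chain gains height at rate $(1-\alpha)\cdot\lambda$; decomposing the timeline into attack episodes delimited by the attacker's resets (an \ad), bounding the expected number of accepted-then-reversed blocks per episode through the per-attack bound $g(k,\alpha)$, bounding the expected height increase per episode from below, and taking the ratio gives a long-run fraction of at most $g(k,\alpha)/(1-\alpha)$, which is below $\epsilon$ by the defining choice $\sigma^{spv}_\alpha=\min\{k\in\mathbb{N}:g(k,\alpha)<\epsilon\cdot(1-\alpha)\}$.

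The main obstacle is this last amortization. Unlike the arbitrary-block case, the attacker chooses \emph{when} each attack begins, so the pre-mined lead at the start of an attack is not automatically $Y$-distributed; one must argue (via Lemma~\ref{lem::phase1} applied at every time, i.e.\ via the ergodicity of the reflecting walk) that waiting for an unusually large lead forces a correspondingly long idle stretch, so that the \emph{time-averaged} per-attack success probability is still controlled by the geometric weights. Making that trade-off quantitative --- together with matching the attacker's arbitrary reset policy to the renewal decomposition, and fixing the lightweight-client reading of Definition~\ref{def_fractional} --- is where the real effort goes; the random-walk and negative-binomial bookkeeping of the first two steps is routine once the $\sigma^{arb}_\alpha$ proof is in hand.
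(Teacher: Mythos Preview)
Your derivation of the per-attack bound $g(k,\alpha)$---the outer geometric weight from Lemma~\ref{lem::phase1}, the negative-binomial race for the $k$ attacker blocks, and the gambler's-ruin catch-up factor---matches the paper exactly. The divergence is in the amortization step, and there the paper takes a cleaner route that dissolves precisely the obstacle you flag. Rather than decomposing time into attack \emph{episodes} (whose start times are adversarially chosen, hence the worry about the pre-mined lead not being $Y$-distributed), the paper fixes an arbitrary \emph{attacker block} $b$ and bounds $\Pr(\mathcal{E}_{attacked}(b))$ directly. The creation time of $b$ is dictated by the Poisson clock, not by the attacker's policy, so Lemma~\ref{lem::phase1} applies verbatim at that instant to bound the lead below $b$; from there the same race-plus-catch-up calculation gives $\Pr(\mathcal{E}_{attacked}(b))\le g(k,\alpha)$ for \emph{every} attacker block. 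Summing over attacker blocks and dividing by $height(t)\ge(1-\alpha)\cdot|\T^t|$ (via the Law of Large Numbers) yields the $g(k,\alpha)/(1-\alpha)$ bound with no renewal or ergodicity argument needed. Your proposed workaround---that waiting for a large lead burns proportionally many attacker blocks, so the time-averaged success rate is still geometric---is morally the per-block argument rephrased; the paper just makes the change of viewpoint explicit and thereby skips the hard part you anticipated.

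One small gap to fix: your claim that ``only the block carrying $tx_1$ is ever accepted'' is false whenever the pre-mined lead $l\ge 1$. In a secret chain of $l+k$ blocks shown to the SPV client, every block with at least $k$ blocks above it (including itself) is accepted, i.e., the bottom $l+1$ blocks, not just the one holding $tx_1$. In your per-episode accounting this would make the numerator larger than the number of successful episodes. The paper's per-block bound handles this automatically, since each of those $l+1$ blocks individually satisfies $\Pr(\mathcal{E}_{attacked}(b))\le g(k,\alpha)$.
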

While the technique used in the following proof upper bounds the success-probability of an attack on an arbitrary block, we stress that its result cannot be interpreted as a security-guarantee for an arbitrary block. Transactions in this scheme are explicitly conditioned on the attack's state before the transmission of the transaction to the merchant. Nonetheless, this theorem shows that the merchant can guarantee any $\epsilon$-fractional robustness, by waiting long enough.
\begin{proof}
Let $b$ be an arbitrary block of the attacker. Let $T\gg0$ and denote by $N_T$ the total number of blocks created in the system up to time $T$. Finally, denote by $I_b$ the indicator random variable of the event where block $b$ participates in a successful attack. Under the generalized Vector76 attack, the attacker never publishes its secret chain. Therefore, eventually, all of the transactions in the honest network's chain will be accepted, as it grows indefinitely. Additionally, whenever an attacker block participates in a successful attack, by definition, the policy must have accepted a transaction in that block. Therefore, assuming a roughly constant number of transactions per block, and denoting the entire set of attacker blocks by $att$, we obtain:
\begin{align*}
& \lim\limits_{t\rightarrow\infty}\frac{\sum_{u\in \T^{t}}\Pr\left(\mathcal{E}_{attacked}(u)\right)}{height(t)}= \\ & \lim\limits_{t\rightarrow\infty}\frac{\frac1t\cdot\sum_{u\in \T^{t}}\Pr\left(\mathcal{E}_{attacked}(u)\right)}{\frac1t\cdot height(t)}= \\ & \lim\limits_{t\rightarrow\infty}\frac{\Pr_b\left(\mathcal{E}_{attacked}(b)\right)}{\frac1t\cdot height(t)},
\end{align*}
where the probability here is also over the choice of $b$ in $T^{\infty}$. This identity holds due to the Law of Large Numbers. The last expression is upper bounded by $\frac{\Pr_b\left(\mathcal{E}_{attacked}(b)\right)}{1-\alpha}$, as the longest chain grows at least at the rate of the honest network's chain.
We now provide an upper bound on $\Pr_b\left(\mathcal{E}_{attacked}(b)\right)$.

Fix the attacker's policy $\pi_A$.
The attacker can utilize $b$ to carry out a generalized Vector76 attack if and only if, at some point in time, the following conditions are met: $\sa>\sh$ and the number of blocks in the attacker's chain above $b$ is at least $k$. Indeed, if the first condition is not met then the merchant will count zero confirmations for its transaction in the honest network's chain, and will not accept. Likewise, if the second one is not met, the merchant will not see $k$ confirmations in the attacker's chain, and will not accept. Assume that block $b$ has $k$ confirmations (including itself). The probability that, in a period of time in which the attacker created $k$ blocks, the honest network created $n$ blocks is given by $\binom{n+k-1}{n}\cdot\alpha^k\cdot(1-\alpha)^n$.

As proven in Section~\ref{sec::premining}, the lead that the attacker gained over the network's chain, prior to mining $b$, can be upper bounded by the distribution vector $(p_l)_{l=0}^{\infty}$, with $p_l=\frac{1-2\cdot\alpha}{1-\alpha}\cdot\ratio^l$. In consequence, given $n$, the probability that the attacker will ever succeed in bypassing the honest network's chain is 1, if $l+k\ge n$, and $\ratio^{n-(l+k)}$, if $n>l+k$ (here we used the worst-case assumption that $\gamma=1$). Therefore, the probability that $b$ participates in a successful attack is upper bounded by $g(k,\alpha)$, which is the sum of $p_l\cdot \binom{n+k-1}{n}\cdot\alpha^k\cdot(1-\alpha)^n\cdot\ratio^{(n-(l+k))^+}$ over all $l$ and $n$.

Therefore, the expression~(\ref{eq::vec76defense}) upper bounds the probability that an arbitrary block of the attacker will have $k$ confirmations and at the same time will be part of a chain longer than the honest network's.
All in all, we obtain that the fraction of attacker blocks (out of the total number of accepted blocks) can be upper bounded by $\frac{g(k,\alpha)}{1-\alpha}$. In particular, $\sigma^{spv}_{\alpha}:=g(n,\alpha,\epsilon)$ is $\epsilon$-fractional-robust.
\end{proof}

\section{The profit of an attacker}\label{sec::profit_of_attacker}
Arguably, one might hope that carrying out double spending attacks would be costly to the attacker, due to the loss in potential profit the attacker could gain from participating honestly in the mining effort. This, presumably, will disincentivize attackers from committing to long attack-strategies which waste their resources. 
Alas, the work of Sapirshtein et. al.~\cite{OPT_SELFISH_MINING} observes that this is not the case in general. An attacker with sufficient hashrate or a significant $\gamma$ can actually combine profitable selfish mining with double spending attacks. Their idea is simple: Every profitable selfish mining scheme arrives with positive probability at states of the form $(\sa,\sh)$, $\sa>\sh+k$, from which the attacker can plan a definitely successful double spending attack (against the policy $\sigma\equiv k$). In this section we aim at quantifying the potential profit for the attacker under optimal combinations of these attacks.

%In this section we investigate the potential gains an attacker may have from an optimal policy. We again employ the use of MDPs to find optimal attacks that 
To this end, we adapt the MDP used above to a setup in which the attacker maximizes its returns from both the block reward and fees, and from possible double spending attacks it manages to perform. We make the simplifying assumption that the reward is fixed between blocks, and thus we reward the attacker with one unit for every block of his which is part of the longest chain. We additionally reward it with $\mathcal{R}$ units of reward for every successful double spend. 
Following~\cite{ES}, we normalize the rewards by the length of the main chain.  
Figure~\ref{fig::cost3} depicts the profit of an attacker that carries out such an attack. The dashed line corresponds to honest mining without double spending attacks (where a miner with $\alpha$ of the hash rate gains an $\alpha$-fraction of the rewards), and the other curves correspond to the profit computed by the MDPs for different values of $\gamma$. Consistent with the results of~\cite{ES,OPT_SELFISH_MINING}, at $\gamma=1$ the attacker is always profitable (at any $\alpha$), and occasionally attacks with double spending attacks. Here, the gains from double spending are assumed to be $\mathcal{R}=2$ block rewards. 

\begin{figure}[ht]
	\centering
	\includegraphics[scale=0.8]{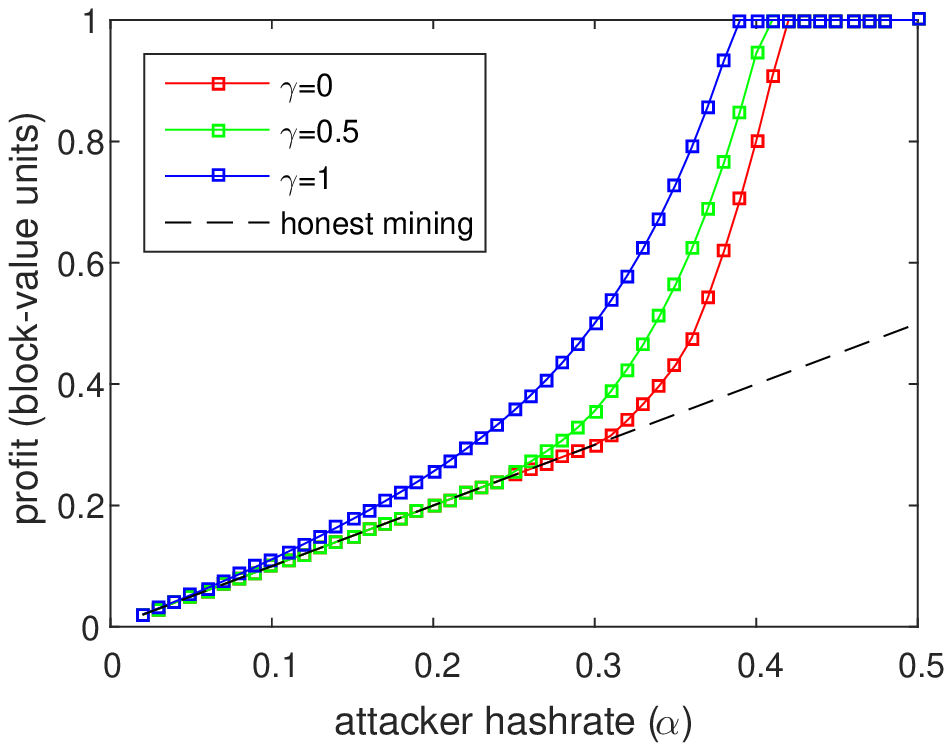}
	\caption{The profit of an attacker from carrying optimal combinations of selfish mining and double spend attacks on the $\sigma\equiv 6$ confirmations policy, assuming a successful double spent transaction is worth three times the value of an ordinary block.} %The different plots corresponds to different values of the attacker's matching capabilties, $\gamma=0,\;0.5,\;1$.}  
	\label{fig::cost3}
\end{figure}

Figure~\ref{fig::cost05} depicts the gains for different values of double spend. The different plots correspond to different reward values from successful double spending. It is interesting to see the expected result: given that rewards from double spending increase, smaller miners can choose to deviate from honest behavior and gain (honest mining is again represented by the dashed line).

\begin{figure}[ht]
    \centering
    \includegraphics[scale=0.8]{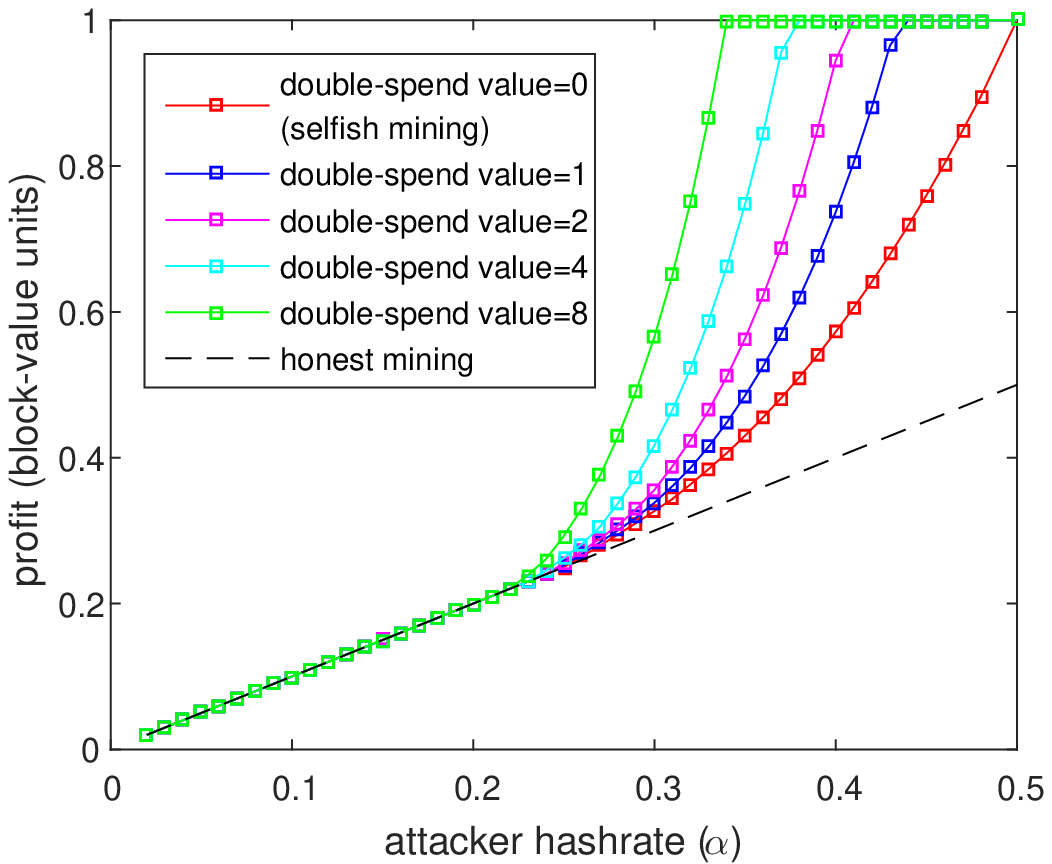}
  \caption{The profit of an attacker from carrying optimal combinations of selfish mining and double spend attacks on the $\sigma\equiv6$ confirmations policy, assuming that it is able to match a fraction of $\gamma=0.5$ of the honest nodes.} %The different plots corresponds to different values $V(tx)$ of transactions succesfully double spent, in values of the average profit from an ordinary block $V(tx)=0,1,3,7,15$.}  
  \label{fig::cost05}
\end{figure}

\section{Conclusions}
We presented a variety of different interpretations of the security of a single transaction in the Bitcoin system, and matching advice regarding the number of confirmations merchants should await in order to properly secure their transactions. Our suggested prescription can be summarized, in short, as follows:
\begin{itemize}
\item Transactions whose timings can be assumed to be non-adversarial can be protected via $\epsilon$-arbitrary-robust policies, such as the $\sigma^{arb}$ family presented in Section~\ref{sec::premining}.
\item Merchants engaging in medium-valued transactions at a regular rate ought defend against a reversal of non-negligible fractions of the transactions they have authorized, in the long term (the $\epsilon$-fractional-robustness security model, coupled with the $\sigma^{frac}$ policy).
\item Recipients of large transactions are advised to commit to policies such as $\sigma^{total}$ (Section~\ref{sec::logarithmic}) which waits a time logarithmic in the chain's length, thereby guaranteeing themselves $\epsilon$-total-robustness, i.e., security from even a single reversal of any of their payments.
\item Light clients are advised to use a policy specifically protecting against the generalized Vector76 attack ($\sigma^{spv}$, Section~\ref{sec::vec76}).
\end{itemize}

Indeed, regarding the latter point, we demonstrated a clear case in which light nodes are less secure than full nodes solely due to the fact that they do not relay blocks further. This observation suggests several mitigation techniques, including sending requests for recent blocks and relaying them to the network, which would in fact imply that hybrids between light nodes and full nodes can be more secure. 

We have further shown that it is difficult to argue that attackers lose revenue from mining if they are trying to attack. Instead, Bitcoin can be considered to give guarantees lower bounding the losses of merchants.

Many research directions remain. The models here should be further adapted to settings with more significant delay. Several hybrid guarantees can also be explored. One such example is a fractional guarantee for a merchant that receives transactions occasionally (but not in every block), or attackers that are more limited in selecting the timing of their attacks (e.g., if a store is only open during the daytime). Finally, it would be interesting to evaluate the guarantees of variants of the protocol such as Bitcoin-NG~\cite{BitcoinNG} and Ethereum~\cite{ETHEREUM} against similar attacks, as they employ slightly different rules to manage the blockchain.

\bibliographystyle{abbrv}
\bibliography{security_model}

\appendix

\section{MDP description}\label{appendix::model}

In this section we describe briefly the computation method of the attack policy that maximizes the fraction of attacked blocks, against a defender policy of the form $\sigma_{\alpha}\equiv k$, for some constant $k$ (that may depend on $\alpha$). A block $b$ of the honest network is successfully attacked if the published chain above it is of length $k$ or more, including $b$ in the count, and the attacker then overrides it by publishing a longer chain (or matching it). A given state of the form $(\sa,\sh)$ represents the lengths of the attacker's and the network's chain, respectively, counted above the latest fork (i.e., the latest block adopted by both parties). Thus, if $\sa>\sh\ge k$, then the attacker can attack the $\sh+1-k$ blocks at the bottom of the honest chain (above the fork)---these blocks have $k$ confirmations, hence were accepted by the policy $\sigma$. The remaining $k-1$ blocks are indeed overridden but not attacked, since the policy didn't accept them yet, hence the transactions in them were not considered safe yet by their recipients. This is the main difference from selfish mining, where the attacker is rewarded for these blocks as well.
When the attacker abandons the attack and adopts, all blocks in the chain it adopted will be accepted and never attacked, since future attack blocks contain them in their history.

Accordingly, we grant the honest network a reward of $\sh$ whenever the attacker adopts. When the attacker adopts, we reward the attacker $\sh-(k-1)$ (this is the number of blocks it successfully attacked), and reward the honest network $k$ blocks (this complements the attacker's reward to the chain's new length $\sh+1$, and is needed for appropriate normalization). Further complexity arises due to the possibility of the attacker matching the honest chain's length (rather than succeeding it by 1). Whether this \ma action is feasible, for a given state $(\sa,\sh)$, is encoded in a third field called $fork$ with possible values: $irrelevant,relevant,active$.
For further details, and for a description of an algorithm that uses this MDP to maximize the fractional non-linear objective -- refer to~\cite{OPT_SELFISH_MINING}.

\begin{table}[h]
\caption{A description of the transition and reward matrices of the MDP. The third column contains the probability of transiting from the state specified in the left-most column, under the action specified therein, to the state on the second one. The corresponding two-dimensional reward (that of the attacker and that of the honest nodes) is specified on the right-most column.\label{table::PandR}}
\begin{tabular}{|K{37mm}|K{34mm}|K{23mm}|K{25mm}|}
\hline
\textbf{State $\times$ Action} & \textbf{State} & \textbf{Probability} & \textbf{Reward}\\
\hline
\multirow{2}{*}{$(\sa,\sh,\cdot),adopt$} & $(1,0,irrelevant)$ & $\alpha$ & \multirow{2}{*}{$(0,\sh)$}\\
\ & $(0,1,irrelevant)$ & $1-\alpha$ & \\
\hline
\multirow{2}{*}{$(\sa,\sh,\cdot),override^\dagger$} & $(\sa-\sh,0,irrelevant)$ & $\alpha$ & \multirow{2}{*}{$(\sh-k+1,k)^\mathsection$}\\
\ & $(\sa-\sh-1,1,relevant)$ & $1-\alpha$ & \\
\hline
\multirow{2}{*}{\makecell{$(\sa,\sh,irrelevant),wait$\ \\ $(\sa,\sh,relevant),wait$}} & $(\sa+1,\sh,irrelevant)$ & $\alpha$ & (0,0)\\
\ & $(\sa,\sh+1,relevant)$ & $1-\alpha$ & (0,0)\\
\hline
\multirow{3}{*}{ \makecell{$(\sa,\sh,active),wait$\ \\ $(\sa,\sh,relevant),match^\ddagger$}} & $(\sa+1,\sh,active)$ & $\alpha$ & (0,0)\\
\ & $(\sa-\sh,1,relevant)$  & $\gamma\cdot(1-\alpha)$ & $(\sh-k+1,k-1)^\mathparagraph$\\
\ & $(\sa,\sh+1,relevant)$ & $(1-\gamma)\cdot(1-\alpha)$ & (0,0)\\
\hline
\end{tabular}
\noindent\begin{scriptsize}$^\dagger$feasible only when $\sa>\sh$\end{scriptsize} \\ \noindent\begin{scriptsize}$^\ddagger$feasible only when $\sa\ge\sh$\end{scriptsize} \\ 
\noindent\begin{scriptsize}$^\mathsection$if $\sh<k-1$, then the reward is $(0,\sh+1)$, as no block was actually attacked.\end{scriptsize}\\ \noindent\begin{scriptsize}$^\mathparagraph$if $\sh<k-1$, then the reward is $(0,\sh)$, , as no block was actually attacked.\end{scriptsize}
\end{table}

\end{document}